\pgfplotsset{compat=newest}
\lstdefinestyle{customcpp}{
  breaklines=true,
  xleftmargin=\parindent,
  language=C++,
  showstringspaces=false,
  basicstyle=\small\ttfamily,
  keywordstyle=\bfseries\color{green!40!black},
  commentstyle=\itshape\color{purple!40!black},
  identifierstyle=\color{blue},
  stringstyle=\color{orange},
  tabsize=2,
  escapeinside={/@}{@/},
  captionpos=b,
  morekeywords={constexpr,decltype,real,interface},
}
\newlength\figureheight 
\newlength\figurewidth 
\newcommand{\RR}{\mathbb{R}}
\newcommand{\CC}{\mathbb{C}}
\newcommand{\norm}[1]{\left\|#1\right\|}
\newcommand{\of}[1]{\left(#1\right)}
\newcommand{\intd}{\,\mathrm{d}}
\newcommand\elastic{\mathrm{e}}
\newcommand\reference{\mathrm{ref}}
\newcommand{\dev}{\operatorname{dev}}
\newcommand{\Div}{\operatorname{div}}
\newcommand{\T}{\mathrm{T}}
\newcommand{\viscop}[2][]{
\ifthenelse{\isempty{#1}}{#2_{vp}}{#2_{vp,\,#1}}
}
\newcommand{\misesnorm}[1]{\norm{#1}_{\text{eq}}}
\newcommand{\cpp}{C\texttt{++}\xspace}
\newcommand\Tstrut{\rule{0pt}{2.6ex}}
\newcommand\Bstrut{\rule[-0.9ex]{0pt}{0pt}}
\newcommand{\stepsize}{h}
\colorlet{lightblue}{blue!50}
\colorlet{lightred}{red!50}
\colorlet{lightgreen}{green!50}
\definecolor{emerald}{rgb}{0.31, 0.78, 0.47}
\definecolor{darkcyan}{rgb}{0.0, 0.55, 0.55}
\title{AutoMat --- Automatic Differentiation for Generalized Standard Materials on GPUs}
\author{Johannes Blühdorn \and
        Nicolas R. Gauger \and
	    Matthias Kabel}
\institute{Johannes Blühdorn (corresponding author)
			\at
           Chair for Scientific Computing \\
		   Technische Universität Kaiserslautern \\
		   \email{johannes.bluehdorn@scicomp.uni-kl.de}	
		   \and
		   Nicolas R. Gauger \at
           Chair for Scientific Computing \\
		   Technische Universität Kaiserslautern\\
		   \email{nicolas.gauger@scicomp.uni-kl.de}
		   \and
		   Matthias Kabel \at
		   Department of Flow and Material Simulation \\
		   Fraunhofer ITWM\\
		   \email{matthias.kabel@itwm.fraunhofer.de}}
\date{~}
\begin{document}
\sloppy

\maketitle

\begin{abstract}
We propose a universal method for the evaluation of generalized standard materials that greatly simplifies the material law implementation process. By means of automatic differentiation and a numerical integration scheme, \emph{AutoMat} reduces the implementation effort to two potential functions. By moving AutoMat to the GPU, we close the performance gap to conventional evaluation routines and demonstrate in detail that the expression level reverse mode of automatic differentiation as well as its extension to second order derivatives can be applied inside CUDA kernels.
We underline the effectiveness and the applicability of AutoMat by integrating it into the FFT-based homogenization scheme of Moulinec and Suquet and discuss the benefits of using AutoMat with respect to runtime and solution accuracy for an elasto-viscoplastic example.
\keywords{Automatic Differentiation \and Generalized Standard Materials \and Numerical Methods for ODEs \and FFT-Based Homogenization \and GPU Computing}
\CRclass{G.1.4 \and G.1.7 \and G.4 \and J.2}
\end{abstract}

\section{Introduction}
\label{section:introduction}

In recent years, the improving quality of micro x-ray computed tomography (CT) images led to a digitalization of the material characterization process for composites. Nowadays, standard CT-devices have a maximum resolution below one $\mu m$ and produce $3$D images of up to $4096^3$ voxels. This permits a detailed view of the microstructure's geometry of composite materials up to the point where continuum approaches are still reasonable. In the context of material characterization, the physical description of the body leads to a partial differential equation (PDE) in which the behavior of the material itself is modeled in terms of a material law. Traditionally, a finite element (FEM) discretization is applied, and during the solution procedure, the material law is evaluated locally at quadrature points. To solve problems of this size with conventional FEM, large computing clusters are required to handle the global stiffness matrices \cite{Arbenz2008,Arbenz2014}.

In the last two decades, the FFT-based homogenization scheme of Moulinec and Suquet \cite{Moulinec1994,Moulinec1998} emerged as a memory efficient matrix-free alternative that was adapted to operate on structured finite element meshes \cite{Willot2015,Schneider2016,Schneider2017,Leuschner2018}. Besides the small memory footprint, the most favorable property of the so-called \emph{basic scheme} is a tangent-free treatment of nonlinear material behavior. However, its required iteration count is proportional to the material contrast, i.\,e.~the maximum of the quotient of the largest and the smallest eigenvalue of the algorithmic tangential stiffness field. Thus, for certain practical applications such as the homogenization of plastifying materials, the convergence behavior can be exceedingly slow \cite{Schneider2020}.

To accelerate the solution process, Zeman et al.~\cite{Zeman2010} and Brisard and Dormieux \cite{Brisard2010,Brisard2012} applied Krylov-subspace solvers to FFT-based homogenization. These methods are extremely fast, but they are restricted to linear problems. By combination with inexact Newton-methods, they were extended to the physically \cite{Gelebart2013} and geometrically \cite{Kabel2014} nonlinear case and exhibited excellent performance \cite{Lucarini2019,Ma2019}. The drawback of this approach consists in either loosing the small memory footprint or the need to calculate the tangential stiffness of the material laws in every iteration of the linear solver. Furthermore, the analytic derivation of the tangent can be tedious and its implementation may require considerable programming effort, and is thus prone to errors. This gave rise to applying Quasi-Newton methods in FFT-based micromechanics \cite{Shanthraj2014,Schneider2019,Chen2019,Chen2019b,Wicht2019,Schneider2020}. There, material tangents are replaced by suitable approximations. To sum up, the choice of the solver is driven by compromises between runtime efficiency, memory efficiency and the implementational effort of an accurate material tangent.

Especially during prototyping and modeling, it might be necessary to assess different material laws. Clearly, it is impractical to derive the material tangent from scratch for every material law under consideration. However, it is also undesirable to be restricted to tangent-free solvers during this phase. Motivated by the work of Rothe and Hartmann \cite{Rothe2014}, we started the development of \emph{AutoMat}, which leverages automatic differentiation and GPU computing to simultaneously address issues of flexibility, accuracy and performance.

Automatic differentiation (AD) refers to techniques for the automatic acquisition of machine accurate derivatives of computer codes \cite{Griewank2008}. 
These have applications in, e.\,g., the setup of adjoint solvers \cite{Schlenkrich2008}, parameter identification \cite{Auroux2017}, shape optimization \cite{Gauger2008}, and machine learning \cite{Guenther2020}. There, AD is applied to a full simulation. Here, we use AD locally for the automatic setup of solvers and eliminate the inconvenience of hand-computed derivatives. For classical CPU architectures, several mature AD tools are available as of now, for example ADOL-C \cite{Walther2009}, dco/c++ \cite{Leppkes2016} and CoDiPack \cite{Sagebaum2019}. Advances in the direction of AD for GPU codes are more recent, examples include dco/map  with applications in computational finance \cite{Leppkes2017}.
In \cite{Rothe2014}, Rothe and Hartmann use the source transformation tool OpenAD \cite{Utke2008} for the automatic computation of material tangents and the assembly of Jacobians for implicit solvers in the context of a multi-level Newton algorithm. In this work, the automatic differentiation ansatz is advanced in several directions.

We focus on the class of generalized standard materials (GSM) \cite{Halphen1975}, which we introduce in Section \ref{sec:GSM}. There, AD enables us to recover the constitutive equations of the material law automatically from given implementations of two potentials, resulting in a fully automatic solver setup. This allows for a highly usable and convenient integration of GSMs into mechanical solvers. We demonstrate this by integrating AutoMat into the FFT-based homogenization scheme of Moulinec and Suquet \cite{Moulinec1998} as implemented in FeelMath\footnote{\url{https://www.itwm.fraunhofer.de/feelmath}}. As our benchmark example for AutoMat, we use an elasto-viscoplastic material model with material parameters adjusted to measurements of a metal-matrix composite. The precise setup is taken from Michel and Suquet \cite{Michel2016} and summarized in Section \ref{section:example}.

The consistent tangent operator is the algorithmic derivative of the stress as it is computed from the strain according to the material law. Its computation requires a differentiation through an integration scheme for ordinary differential equations (ODEs). The conventional backward Euler step is differentiated in \cite{Simo2006} by hand. We show in Section \ref{subsec:single_implicit_euler_step} that this procedure can be fully automatized. To understand the numerical properties of the tangent computation, we interpret it in Section \ref{subsection:rosenbrock_runge_kutta} as a single implicit Euler step applied to an ODE for the derivative. Since this ODE depends on the chosen loading step size, convergence of the tangent for decreasing step size is not guaranteed. This motivated us to explore schemes with adaptive time steps instead.
The differentiation of ODE integration schemes in a blackbox manner, that is, without consideration of the structure of the integration algorithm and its approximative nature, usually leads to incorrect derivatives \cite{Eberhard1999}.
Therefore, we refine the strategy of solving simultaneously an ODE for the derivatives \cite{Eberhard1999,Pruess2010} in the presence of step size control for Rosenbrock methods and both explicit and implicit Runge-Kutta schemes.
Particularly, the relation to blackbox differentiation is explored. With the results, we can guarantee that the tangent is as accurate as the primal solution. The overall robustness and accuracy of the proposed scheme is assessed in Section \ref{subsection:solution_accuracy}. We achieve further robustness with respect to the choice of the ODE solver by employing a stress-driven error control, which we present in Section \ref{subsec:stress_driven_error_control}.

In FFT-based homogenization, the computationally costly simulation components are the Fourier transform and material law evaluation \cite{GrimmStrele2019}. For nonlinear materials, the latter tends to dominate the overall run time \cite{Kabel2017} and is hence performance critical. The spatial independence of material law evaluations allows for parallelization, which is typically used in an efficient implementation.
Throughout Section \ref{section:automatic_evaluation}, we compare parallelized material law evaluations on the CPU with GPU accelerated material law evaluation. We achieve a notable speedup for conventional material law evaluation, but particularly for the computationally more involved automatic evaluation strategies presented in this paper, there are significant performance gains. In our example and setup, we were able to close the performance gap between conventional material law evaluation on the CPU and automatic material law evaluation on the GPU.
The good performance would not be possible without an efficient implementation of automatic differentiation on the GPU. Therefore, we developed an operator overloading AD tool specifically for the application presented in this paper. It is based on expression template techniques; previously in AD, these were successfully applied for the treatment of right hand sides in the forward mode \cite{Phipps2012} and in Jacobi taping \cite{Hogan2014} as well as primal value taping \cite{Sagebaum2018} in the reverse mode.
The details of the implementation and its further optimizations are presented in Section \ref{section:automatic_differentiation_on_gpus}.
In Section \ref{section:layout_profiling_limits}, remaining influence factors on the performance are discussed. We analyze the performance limiters of AutoMat, present design choices and optimizations of the GPU implementation and discuss overlap of CPU workloads, GPU workloads, and data exchange as well as reductions of the memory footprint.

Finally, we summarize and conclude our work in Section \ref{sec:Conclusion}.

\section{Generalized Standard Materials}
\label{sec:GSM}

The notion of generalized standard materials is originally introduced in \cite{Halphen1975}; a compact introduction to the subject can be found in \cite{Michel2016}. Let $\varepsilon$ denote the right Cauchy-Green strain tensor, $\sigma$ the Cauchy stress tensor and $a\in\RR^m$ the vector of internal variables, all depending on time and space. The constitutive equations of the material law are given in terms of a Helmholtz free energy density $(\varepsilon,a)\mapsto\omega(\varepsilon,a)$ and a force potential $A\mapsto\Psi(A)$ and read
\begin{align}
	\label{eq:gsm_sigma}
	\sigma&=\frac{\partial\omega}{\partial\varepsilon}(\varepsilon,a),\\
	\label{eq:gsm_internal}
	\dot{a}&=\frac{\partial\Psi}{\partial A}\of{-\frac{\partial\omega}{\partial a}(\varepsilon,a)}.
\end{align}
$A$ is referred to as generalized stresses and if both $\omega$ and $\Psi$ are convex functions of their arguments, we speak of a generalized standard material. The dissipation potential which is the convex dual of $\Psi$ is not used in the present study.

After space discretization, evaluations of above stress-strain relationship and evolution of internal variables are required in the quadrature points. We drop the $x$ dependency in the notation as the specific location does not change throughout a single material law evaluation. After time discretization, the material law inputs at a quadrature point consist of a strain tensor $\varepsilon_n$ and internal variables $a_n$ at time $t_n$ as well as a strain tensor $\varepsilon_{n+1}$ which is usually only a prediction of the actual strain tensor at time $t_{n+1}$ in the context of the surrounding elasticity solver.
Then, in each quadrature point, the material law can be evaluated as follows.
\begin{enumerate}
	\item Solve the ODE for the internal variables \eqref{eq:gsm_internal} with initial data $(t_n,a_n)$ on the time interval $[t_n,t_{n+1}]$. Recover $\varepsilon(t)$ by means of linear interpolation between $\varepsilon_n$ and $\varepsilon_{n+1}$. This way, obtain $a_{n+1}$.
	\item Compute $\sigma_{n+1}$ via \eqref{eq:gsm_sigma} from $\varepsilon_{n+1}$ and $a_{n+1}$.
\end{enumerate}
Additionally, the consistent tangent operator $C_{n+1}$ which is the algorithmic derivative of $\sigma_{n+1}$ with respect to $\varepsilon_{n+1}$ is usually computed along with the material law. It is used in the FFT-based homogenization scheme to determine the optimal reference material.

In view of the decision for an integration scheme for \eqref{eq:gsm_internal}, negative eigenvalues of the Jacobian of the ODE's right hand side indicate that explicit solvers might display unstable behaviour \cite{Hairer2010}, that is, require extremely small steps. The following  theorem states that evolution equations arising from GSMs are subject to this issue. Following \cite{Wu1988}, $M\in\CC^{m\times m}$ is called positive semi-definite if $\forall x\in\CC^m\colon x^\ast Mx\in\RR_{\geq 0}$. This definition implies that each positive semi-definite complex matrix is Hermitean.

\begin{theorem}
	\label{theorem:gsm_jacobian_rhs_eigenvalues}
	Let a GSM be specified by $\omega$ and $\Psi$ and assume that both are $C^2$. If $\lambda$ is an eigenvalue of the Jacobian with respect to $a$ of the right hand side of \eqref{eq:gsm_internal}, then $\lambda\in\RR_{\leq 0}$.
\end{theorem}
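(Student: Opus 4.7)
The plan is to compute the Jacobian of the right hand side of \eqref{eq:gsm_internal} explicitly via the chain rule, observe that it is the negative of a product of two real symmetric positive semi-definite matrices, and then invoke the classical fact that such a product has real non-negative eigenvalues.

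In detail, writing $F(a) := \frac{\partial\Psi}{\partial A}\of{-\frac{\partial\omega}{\partial a}(\varepsilon,a)}$, the chain rule gives
\begin{equation*}
	\frac{\partial F}{\partial a}(a) \;=\; -\,H_\Psi\of{-\tfrac{\partial\omega}{\partial a}(\varepsilon,a)}\; H_\omega^{aa}(\varepsilon,a),
\end{equation*}
where $H_\Psi$ denotes the Hessian of $\Psi$ and $H_\omega^{aa}$ the Hessian of $\omega$ with respect to $a$. Since $\omega$ and $\Psi$ are $C^2$ and convex, both Hessians exist, are real symmetric, and are positive semi-definite. Hence the task reduces to proving that every eigenvalue of $H_\Psi\,H_\omega^{aa}$ lies in $\RR_{\geq 0}$.

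For this reduction I would argue as follows. If $H_\omega^{aa}$ is strictly positive definite, it admits a symmetric positive definite square root $S$. Then
\begin{equation*}
	S\,(H_\Psi H_\omega^{aa})\,S^{-1} \;=\; S\,H_\Psi\,S,
\end{equation*}
so $H_\Psi H_\omega^{aa}$ is similar to $S H_\Psi S$, which is real symmetric and positive semi-definite because $x^{\T}(S H_\Psi S)x = (Sx)^{\T} H_\Psi (Sx) \geq 0$ for every real $x$, and by the definition cited before the theorem this extends to complex arguments. Consequently the eigenvalues of $H_\Psi H_\omega^{aa}$ are real and non-negative. For the general case, in which $H_\omega^{aa}$ may be only positive semi-definite, I would replace $H_\omega^{aa}$ by the positive definite perturbation $H_\omega^{aa}+\epsilon I$, apply the preceding argument, and pass to the limit $\epsilon\to 0^+$; continuity of the roots of the characteristic polynomial in its coefficients then transfers the sign information to the unperturbed product.

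The main obstacle is conceptual rather than technical: a product of two symmetric matrices is generally not symmetric, so real-valuedness of its spectrum is not automatic, and one must find the similarity transform that makes symmetry visible. Once that similarity is in place and the degenerate case has been handled by perturbation, combining with the leading minus sign from the chain rule immediately yields $\lambda\in\RR_{\leq 0}$.
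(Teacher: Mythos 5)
Your proposal is correct, and its overall skeleton coincides with the paper's: both compute the Jacobian by the chain rule as $-\frac{\partial^2\Psi}{\partial A^2}\of{-\frac{\partial\omega}{\partial a}}\frac{\partial^2\omega}{\partial a^2}$, observe that the two Hessians are real symmetric positive semi-definite by convexity and $C^2$ regularity, and reduce everything to showing that a product of two such matrices has spectrum in $\RR_{\geq 0}$. Where you genuinely diverge is in how that last fact is established. The paper outsources it to Theorem 2.2 of Wu \cite{Wu1988}, which states that a product of positive semi-definite \emph{complex} matrices is similar to a positive semi-definite complex matrix; this is why the paper bothers to introduce the complex notion of semi-definiteness and to note that real symmetric PSD matrices remain PSD over $\CC$. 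You instead prove the needed special case from scratch: conjugating $H_\Psi H_\omega^{aa}$ by the symmetric positive definite square root $S$ of $H_\omega^{aa}$ exhibits the similarity to the symmetric PSD matrix $S H_\Psi S$, and the degenerate case is recovered by perturbing to $H_\omega^{aa}+\epsilon I$ and using continuity of the roots of the characteristic polynomial. Both arguments are sound; yours buys self-containedness at the cost of the $\epsilon$-limit bookkeeping (which could even be avoided by noting that $H_\Psi H_\omega^{1/2}\cdot H_\omega^{1/2}$ and $H_\omega^{1/2}H_\Psi H_\omega^{1/2}$ share a characteristic polynomial), while the paper's citation is shorter and covers the semi-definite case in one stroke. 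A small remark: once you have similarity to a \emph{real} symmetric PSD matrix, the detour through the complex definition of semi-definiteness in your argument is unnecessary --- real symmetry already forces a real spectrum.
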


\begin{proof}
	The Jacobian with respect to $a$ of the right hand side of \eqref{eq:gsm_internal} reads
	\begin{multline}
		\label{eq:gsm_jacobian_rhs}
			\frac{\intd}{\intd a}\of{\frac{\partial\Psi}{\partial A}\of{-\frac{\partial\omega}{\partial a}(\varepsilon, a)}}\\
			=-\frac{\partial^2\Psi}{\partial A^2}\of{-\frac{\partial\omega}{\partial a}(\varepsilon, a)}\frac{\partial^2\omega}{\partial a^2}(\varepsilon, a).
	\end{multline}
	As Hessians of $C^2$ functions, both $\frac{\partial^2\Psi}{\partial A^2}$ and $\frac{\partial^2\omega}{\partial a^2}$ are symmetric. Since both $\omega$ and $\Psi$ are convex and $C^2$, $\frac{\partial^2\Psi}{\partial A^2}$ and $\frac{\partial^2\omega}{\partial a^2}$ are also positive semi-definite as real matrices, that is,
	\begin{equation*}
		\forall x\in\RR^m\colon x^\T Mx\geq 0,
	\end{equation*}
	where $M$ denotes any of both Hessians. Symmetric and positive semi-definite real matrices are also positive semi-definite as complex matrices. By Theorem 2.2 in \cite{Wu1988}, the product of positive semi-definite complex matrices is similar to a positive semi-definite complex matrix, that is, there exists an invertible complex matrix $T$ such that $T^{-1}\frac{\partial^2\Psi}{\partial A^2}\frac{\partial^2\omega}{\partial a^2}T$ is a positive semi-definite complex matrix. All eigenvalues of a positive semi-definite complex matrix lie in $\RR_{\geq 0}$. As similarity preserves eigenvalues, all eigenvalues of the product $\frac{\partial^2\Psi}{\partial A^2}\frac{\partial^2\omega}{\partial a^2}$ are contained in $\RR_{\geq 0}$; hence all eigenvalues of \eqref{eq:gsm_jacobian_rhs} are contained in $\RR_{\leq 0}$.\qed
\end{proof}

Another example for an eigenvalue proof based on definiteness and convexity in the context of material simulation can be found in \cite{Cormeau1975}. There, a time-marching scheme for the solution of a viscoplastic problem is identified as a system of ODEs for the stresses at integration points and the eigenvalues of the Jacobian of the right hand side are used to assess stability properties.

Back to Theorem \ref{theorem:gsm_jacobian_rhs_eigenvalues}, whether explicit solvers (with adaptive step size control) or implicit solvers are faster depends on the specific material law, internal variable values, applied strain and integration interval length. To give a short example, we ignore the elasticy solver and focus on the material law evaluation at a single voxel. Consider the ODE arising from the example \eqref{eq:michel_suquet_omega}, \eqref{eq:michel_suquet_psi} that is introduced in the next section with parameters from Table \ref{table:misu_parameters}. We set $\varepsilon_n=0$, $a_n=0$, $\varepsilon_{xx,\,n+1}=0.4\,\%$ and integrate over time intervals of varying length $\Delta t$. Figure \ref{figure:ode_solver_choice} displays the numbers of intermediate steps and run times observed with the ODE solvers available in MATLAB\footnote{\url{https://de.mathworks.com/products/matlab.html}}.
\begin{figure*}[t]
	\centering
	\begin{minipage}{0.4\textwidth}
		\includegraphics{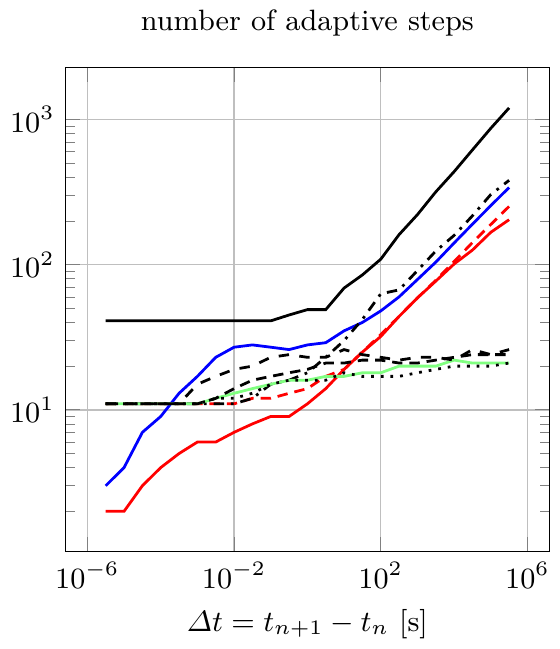}
	\end{minipage}
	\begin{minipage}{0.59\textwidth}
	    \includegraphics{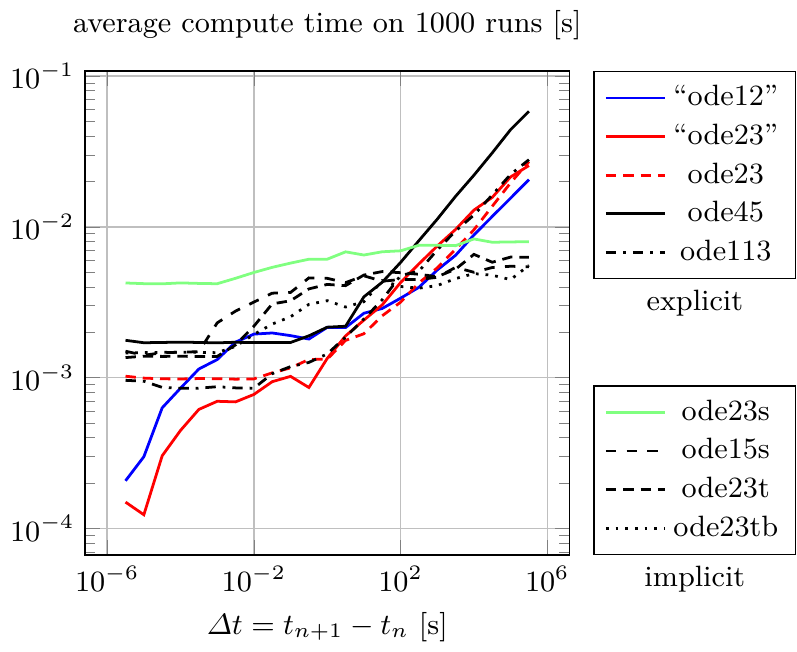}
	\end{minipage}
	\caption{Integration of ODE from elasto-viscoplastic GSM \cite{Michel2016} with MATLAB solvers and default tolerances. Includes also a lower order scheme ``ode12'' and for comparison an analogous custom implementation ``ode23''. Details on the schemes are included in Section \ref{subsection:rosenbrock_runge_kutta}.}
	\label{figure:ode_solver_choice}
\end{figure*}
The performance of explicit solvers is competitive up to rather large integration interval lengths. It is clearly linked to the number of intermediate steps taken by adaptive step size control and only for large $\Delta t$, the number of adaptive steps taken by explicit solvers is driven by stability rather than accuracy and increases with $\Delta t$.
In Section \ref{section:automatic_evaluation}, we refine both explicit and implicit solution strategies.

\section{Example}

\label{section:example}

Throughout the paper at hand, we perform our numerical studies for a uni-axial tension-compression test of a short fiber reinforced metal-matrix composite (MMC) taken from \cite{Michel2016}.

\paragraph{Microstructure}

The MMC consists of $10.2$ vol\% Al203 fibers embedded in an aluminum matrix. In our periodically generated micro-structure (see Figure \ref{fig:MMC}), the planar isotropic distributed fibers have a diameter of 9 $\mu$m and a length of 135 $\mu$m. This volume element of $150\times 150\times 150$ $\mu$m$^3$ was discretized by $150 \times 150 \times 150$ voxels.

\begin{figure}[H]
\includegraphics[width=0.48\textwidth]{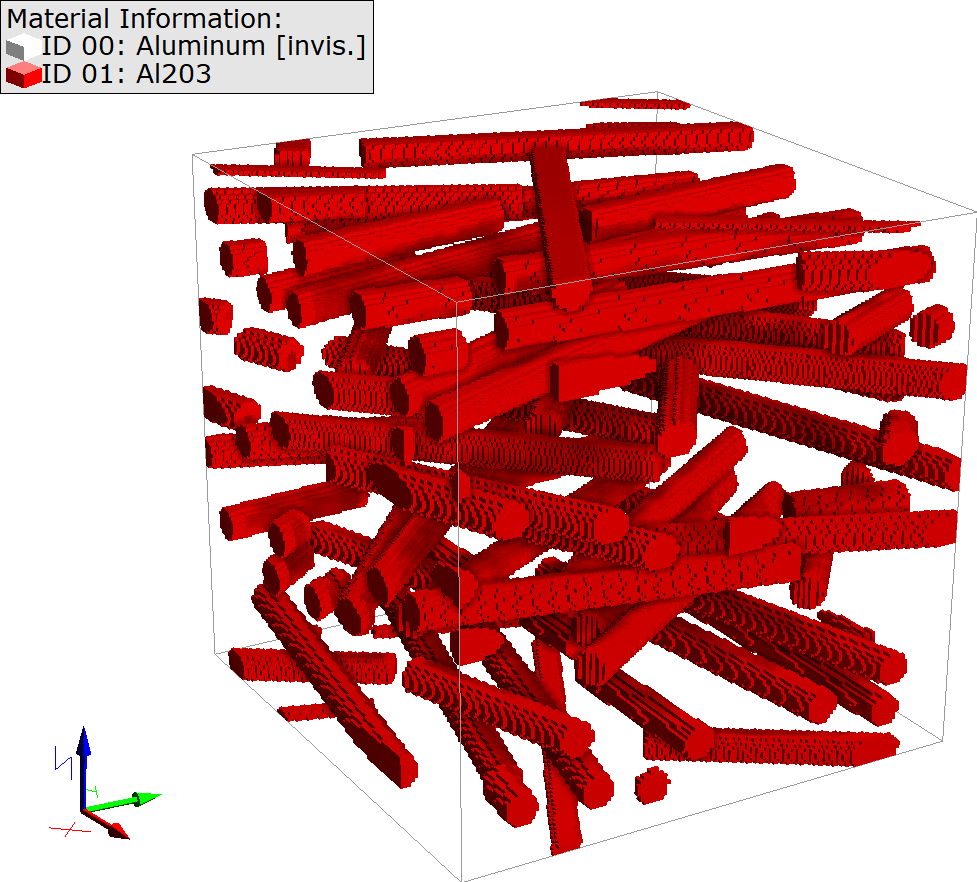}
\caption{Micro-structure of the MMC generated with GeoDict\protect\footnotemark.}
\label{fig:MMC}
\end{figure}
\footnotetext{\url{https://www.geodict.com/}}

\paragraph{Material Model}

The Al203 fibers are modeled linear elastic with Young's modulus $E$ and Poisson's ratio $\nu$ and the aluminum matrix as the elasto-viscoplastic GSM given by the potentials
\begin{multline}
	\label{eq:michel_suquet_omega}
	\omega(\varepsilon,\,\viscop\varepsilon,\,\alpha)=\\
	\frac{1}{2}\varepsilon_\elastic^T C^\elastic\varepsilon_\elastic+\frac{1}{3}\viscop\varepsilon^T\mathrm{H}\viscop\varepsilon+\int_0^\alpha \mathcal{K}(q)\intd q,
\end{multline}
where $\mathrm{H}=\operatorname{diag}\of{H,H,H,\frac{H}{2},\frac{H}{2},\frac{H}{2}}$ and $\varepsilon_\elastic=\varepsilon-\viscop\varepsilon$, and
\begin{multline}
	\label{eq:michel_suquet_psi}
	\Psi(\viscop{A},\,A_{\alpha})=\\
	\frac{\sigma_d \dot{\varepsilon}_0}{n+1}\of{\frac{\of{\misesnorm{\dev \viscop{A}}+A_{\alpha}}^+}{\sigma_d}}^{n+1}
\end{multline}
with viscoplastic strain $\viscop\varepsilon$ and equivalent plastic strain $\alpha$ as internal variables. $C^\elastic$ is an elastic stiffness matrix given in terms of a second $(E,\,\nu)$ pair and $\mathcal{K}(\alpha)$ describes the isotropic hardening and $H$ the (linear) kinematic hardening, whereas the viscous effects are given by the drag stress $\sigma_d$, the rate sensitivity $n$ and the reference strain rate $\dot{\varepsilon}_0$. For computational efficiency, the Voigt notation \cite{Voigt1966} is used for strain and stiffness tensors.

For the studied example, the nonlinear parameters of the aluminum matrix were calibrated without isotropic hardening, i.\,e.~$\mathcal{K}(\alpha)$ was assumed to be equal to the initial yield stress $\sigma_Y$, $\mathcal{K}(\alpha) \equiv \sigma_Y$.
The complete set of material parameters is reproduced in Table \ref{table:misu_parameters}.

\begin{table}
\centering
\begin{tabular}{c c | c c c@{}}
\toprule
\textbf{Parameter} & \textbf{Unit}  & \textbf{Aluminum} & \textbf{Al203} \tabularnewline
\midrule 
$E$ & GPa  & $55$  & $300$ \tabularnewline
$\nu$ & 1 & $0.33$ & $0.25$ \tabularnewline
\midrule
$\sigma_Y$ & MPa & $25$ & - \tabularnewline
$H$ & GPa & $1.8$ & - \tabularnewline
$\dot{\varepsilon}_0$ & 1/s & $1$ & - \tabularnewline
$\sigma_d$ & MPa & $130$ & - \tabularnewline
$n$ & 1 & $3.6$ & - \tabularnewline
\bottomrule
\end{tabular}
\caption{Parameters for elasto-viscoplastic GSM \protect\cite{Michel2016}.}
\label{table:misu_parameters}
\end{table}

\paragraph{Boundary Conditions}
As described in detail by Michel and Suquet, the volume element is submitted to a uni-axial tension-compression test at constant strain rate with alternating sign in loading direction (see Figure \ref{fig:BoundaryCondition}),
\begin{align*}
	&\dot{\varepsilon}_{xx} = \pm 1.4 \cdot 10^{-3}\,\textrm{s}^{-1},\\
	&-3.48441 \cdot 10^{-3} \leq \varepsilon_{xx} \leq 3.58454 \cdot 10^{-3}.
\end{align*}
The loading path is discretized in an equidistant manner with a granularity between 20 and 320 steps. \emph{If not mentioned otherwise, 80 loading steps are used.}
\begin{figure}[H]
\centering
\includegraphics{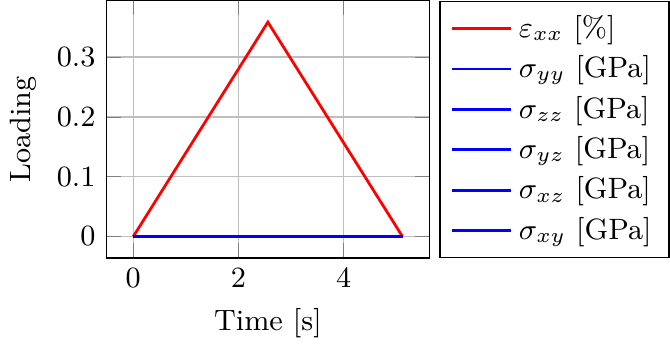}
\caption{Mixed boundary conditions \protect\cite{Kabel2016} for the uni-axial experiment.}
\label{fig:BoundaryCondition}
\end{figure}

\paragraph{Material Law Evaluations}

In each loading step, a stationary elastic problem is solved by FFT-based homogenization \cite{Moulinec1998}. This method is relying on an FFT-based preconditioner \cite{Kabel2014} defined by the constant coefficient linear elastic problem
$\Div\of{C^\reference \varepsilon} = 0$,
where $C^\reference$ is called the \emph{reference stiffness} and has to be chosen depending on the locally varying tangential stiffness of the material laws \cite{Kabel2014}. The reference stiffness can be either fixed at the beginning of the time dependent simulation by using only the initial elastic stiffness of the material laws or it can be adjusted in each loading step to the current tangential stiffness to reduce the number of iterations necessary for convergence. In the first case, this involves one material law evaluation per voxel \emph{with} tangent at the beginning of the initial loading step and in the latter case at the beginning of each loading step. The (matrix-free) FFT-based solver itself only performs one material law evaluation \emph{without} tangent per iteration and voxel. The performance impact of the reference material setup prior to the first loading step is negligible; therefore, \emph{whenever we display time spent on material law evaluations with tangent, the configuration at hand updates the reference material.} Then, material law evaluations with and without tangent are timed separately. We use the types of error control explained in Section \ref{subsec:stress_driven_error_control} throughout.

\paragraph{Parallelization}

We perform our tests on a dual-socket workstation with two Intel Xeon E5-2687Wv4 processors at 3\,GHz ($2\times 12$ cores) and an Nvidia Quadro GV100 graphics card. As this card has uncapped double precision performance, we keep the elasticity solver's double precision also for material law evaluations on the GPU. Nonetheless, single precision seems to work well for the material law presented above. This is of importance on GPUs without good double precision performance, and can also speed up computations in general; especially material law evaluations with tangent seem to benefit performance-wise from single precision. We use OpenMP\footnote{\url{https://www.openmp.org/}} for CPU parallelization; on the graphics card, CUDA\footnote{\url{https://developer.nvidia.com/cuda-zone}} is used. Details on the computational layout can be found in Section \ref{section:layout_profiling_limits}.

\section{Automatic Evaluation}
\label{section:automatic_evaluation}

Conventionally, efficient methods for the evaluation of specific material laws are derived by hand. For example, GSMs such as \eqref{eq:michel_suquet_omega}, \eqref{eq:michel_suquet_psi} are discretized in Chapter 3 of \cite{Simo2006} by means of a single backward Euler step. With the help of an explicit formula for the flow direction, the resulting nonlinear system of equations is reduced to a scalar equation that is then solved by Newton's method. For the computation of $C_{n+1}$, the derivative of the corresponding nonlinear equation solve is recovered in an implicit function theorem fashion. Numerical integration and algorithmic differentiation are both carried out by hand. We refer to this approach as \emph{conventional evaluation strategy} --- it is material law specific. For our performance studies, it serves as a baseline. In this work, we explore several flavours of the \emph{automatic evaluation strategy} depicted in Figure \ref{figure:automatic_evaluation_strategy}
\begin{figure*}[t]
	\centering
	\includegraphics{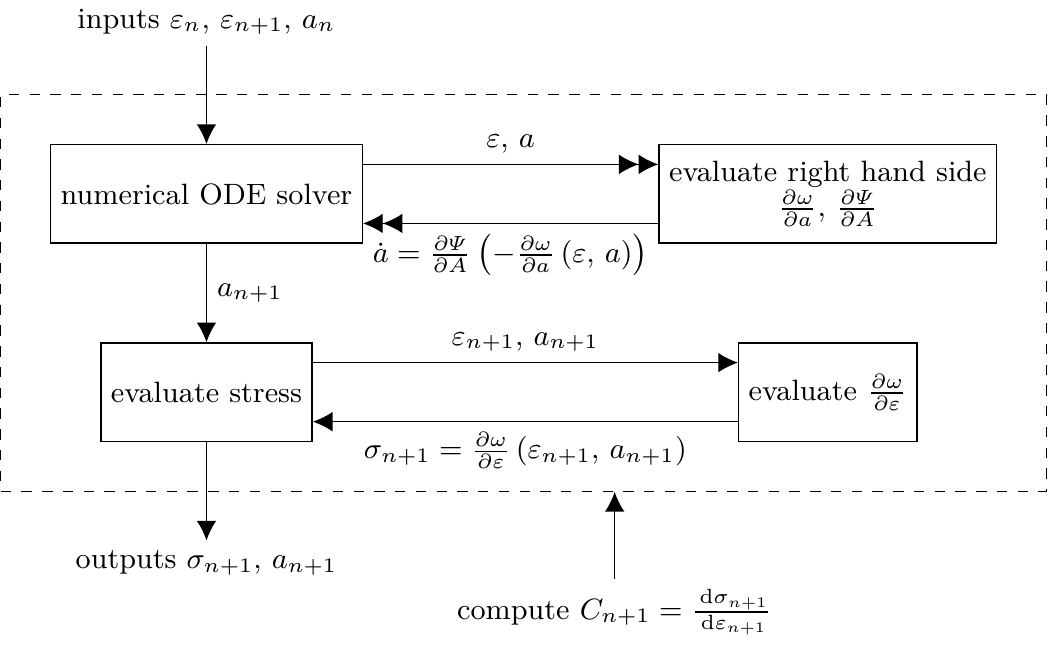}
	\caption{Automatic evaluation strategy.}
	\label{figure:automatic_evaluation_strategy}
\end{figure*}
that relies on AD to evaluate the various partials of $\omega$ and $\Psi$, to assemble Jacobians as required for ODE integration schemes and finally, to compute the material tangent $C_{n+1}$, which involves a differentiation of the whole algorithm depicted in Figure \ref{figure:automatic_evaluation_strategy}. The strategy can easily be adapted to other material laws by exchanging the implementations of the potentials. We also explore the performance benefits of providing hand-derived implementations of the partials of $\omega$ and $\Psi$ for an otherwise automatic evaluation; we refer to this as \emph{semi-automatic evaluation strategy}.

\subsection{Single Implicit Euler Step}
\label{subsec:single_implicit_euler_step}

AD allows us to turn the conventional scheme from \cite{Simo2006} into an automatic evaluation strategy that is not specific to a certain GSM and requires only implementations of $\omega$ and $\Psi$. Let $\stepsize=t_{n+1}-t_n$ be the loading step size and
\begin{equation*}
	f(\varepsilon,\,a)=\frac{\partial\Psi}{\partial A}\of{-\frac{\partial\omega}{\partial a}(\varepsilon,\,a)},
\end{equation*}
that is, the right hand side of the ODE \eqref{eq:gsm_internal}. An application of a single implicit Euler step yields 
\begin{equation*}
	a_{n+1}=a_n+\stepsize\cdot f(\varepsilon_{n+1},\,a_{n+1}),
\end{equation*}
that is, the nonlinear system of equations
\begin{equation}
	\label{eq:implicit_euler}
	\underbrace{a_{n+1}-\stepsize\cdot f\of{\varepsilon_{n+1},\,a_{n+1}}-a_n}_{=:\,F(\varepsilon_{n+1},\,a_{n+1})}=0
\end{equation}
for $a_{n+1}$, which we solve with Newton's method. We initialize $a_{n+1}^{(0)}=a_n$ and iterate
$a_{n+1}^{(k+1)}=a_{n+1}^{(k)}-\Delta a_{n+1}^{(k)}$ where
\small
\begin{multline*}
	\Delta a_{n+1}^{(k)}=\of{\frac{\partial F}{\partial a}\of{\varepsilon_{n+1},\,a_{n+1}^{(k)}}}^{-1}F\of{\varepsilon_{n+1},\,a_{n+1}^{(k)}}.
\end{multline*}
\normalsize
The application of AD is twofold. Each evaluation of $f$ (or $F$) involves evaluations of the partials $\frac{\partial\Psi}{\partial A}$ and $\frac{\partial\omega}{\partial a}$. This can be automatized by first order automatic differentiation. Second, the evaluations of the Jacobian $\frac{\partial F}{\partial a}$ can be realized likewise by AD but require --- due to the already involved partials --- an additional derivative order. Note $\frac{\partial F}{\partial a}=I-\stepsize\cdot\frac{\partial f}{\partial a}$, so it suffices to apply AD to $f$. The material tangent
\begin{multline}
	\label{eq:tangent_formula}
	C_{n+1}=\frac{\mathrm{d}\sigma_{n+1}}{\mathrm{d}\varepsilon_{n+1}}=\frac{\partial^2\omega}{\partial\varepsilon^2}(\varepsilon_{n+1},\,a_{n+1})\\
	+\frac{\partial^2\omega}{\partial a\partial\varepsilon}(\varepsilon_{n+1},\,a_{n+1})\frac{\mathrm{d} a_{n+1}}{\mathrm{d}\varepsilon_{n+1}}
\end{multline}
requires the derivative of the evolved internal variables with respect to the predicted strain. Assuming --- similar to the derivation of the scheme in \cite{Simo2006} --- that the primary system of equations was solved exactly, it holds by differentiating \eqref{eq:implicit_euler} with respect to a single strain component
\begin{multline*}
	\frac{\mathrm{d} a_{n+1}}{\mathrm{d}\varepsilon_{n+1,\,i}}=\stepsize\cdot\frac{\partial f}{\partial\varepsilon_{n+1,\,i}}(\varepsilon_{n+1},\,a_{n+1})\\
	+\stepsize\cdot\frac{\partial f}{\partial a}(\varepsilon_{n+1},\,a_{n+1})\frac{\mathrm{d} a_{n+1}}{\mathrm{d}\varepsilon_{n+1,i}},
\end{multline*}
\vspace{-2ex}
that is,\\
\vspace{-2ex}
\begin{multline*}
	\of{I-\stepsize\cdot\frac{\partial f}{\partial a}\of{\varepsilon_{n+1},\,a_{n+1}}}\frac{\mathrm{d} a_{n+1}}{\mathrm{d} \varepsilon_{n+1,\,i}}\\
	=\stepsize\cdot\frac{\partial f}{\partial\varepsilon_{n+1,\,i}}(\varepsilon_{n+1},\,a_{n+1}).
\end{multline*}
Hence, the required derivative values can be obtained in a postprocessing step by six additional linear system solves, one for each Voigt component of the strain and with the same coefficient matrix the next Newton iteration would use. $\frac{\partial f}{\partial\varepsilon_{n+1,\,i}}$ can be evaluated with AD analogously to $\frac{\partial f}{\partial a}$. Since $\frac{\mathrm{d}\varepsilon_{n+1}}{\mathrm{d}\varepsilon_{n+1}}=I$, it is then straightforward to propagate the derivatives with respect to the strain with AD through an evaluation of the stress relationship \eqref{eq:gsm_sigma} to obtain both $\sigma$ and $C_{n+1}$. As before, this involves also a partial of $\omega$ and requires second order AD capabilities.

Table \ref{table:implicit_euler_timings} provides an overview over time spent with the implicit Euler variants on material law evaluations with and without tangent in our running example. Here, all displayed configurations perform the exact same number of both types of material law evaluations, and the timings are immediately comparable. We should also mention that here, all simulation results obtained are identical up to machine precision.
\begin{table*}
\centering
\begin{tabular}{c c c c@{}}
\toprule
\textbf{architecture} & \textbf{evaluation strategy} & \makecell{\textbf{material law [s]}\\no tangent} & \makecell{\textbf{material law [s]}\\tangent}\tabularnewline
\midrule
CPU & conventional & 830.1 & 50.9\tabularnewline
CPU & automatic & 4996.7 & 408.1\tabularnewline
CPU & semi-automatic & 1668.1 & 127.5\tabularnewline
\midrule
GPU & conventional & 249.7 & 34.9\tabularnewline
GPU & automatic & 257.7 & 38.1\tabularnewline
GPU &  semi-automatic & 257.8 & 38.3\tabularnewline
\bottomrule
\end{tabular}
\caption{Total time spent on both types of material law evaluations with implicit Euler strategies.}
\label{table:implicit_euler_timings}
\end{table*}
The timings reveal two important trends. First, the automatization on the CPU is costly. Given the significant runtime improvements from switching to the semi-automatic evaluation strategy, part of this cost is due to AD and the automatic computation of the partials of $\omega$ and $\Psi$. Another part of the cost is due to the generality. Other than in the conventional implementation, we lack additional knowledge about the roles of internal variables. We have no formula for the flow direction and solve a full system of nonlinear equations with Newton's method instead. All evaluation strategies are notably accelerated by the GPU, and here, most important, \emph{even keeping the full automatization does not incur visible performance costs}. This is due to overlap of CPU and GPU workloads as detailed in Section \ref{section:layout_profiling_limits}.

\subsection{Rosenbrock and Runge-Kutta Schemes with Adaptive Step Size}
\label{subsection:rosenbrock_runge_kutta}

With a single implicit Euler step, there is no direct form of error control for the involved material law evaluations. The surrounding elasticity solver cannot compensate this lack of accuracy and will therefore solve the time discretized elasticity problem with potentially wrong stress (and stiffness) input. This regards nonlinear effects in particular. Since we cannot know in advance if and when these take place, we have to discretize the whole loading path with small loading steps. As we detail in the following, while this helps with the accuracy of stresses, the accuracy of tangents can not necessarily be guaranteed this way.

To that end, we first establish an interpretation of the algorithmic derivative of a single implicit Euler step as introduced in the previous section as a single implicit Euler step applied to an ODE for the derivative. Let a parameter dependent ODE system
\begin{equation}
	\label{eq:ode_with_parameters}
	\dot{y}=f(y,\,p)
\end{equation}
be given. We differentiate both sides of \eqref{eq:ode_with_parameters} with respect to $p$ and formally interchange the order of derivatives on the left hand side to obtain
\begin{equation}
	\label{eq:differentiated_ode_with_parameters}
	\frac{\intd}{\intd t}\of{\frac{\intd y}{\intd p}}=\frac{\partial f}{\partial y}(y,\,p)\frac{\intd y}{\intd p}+\frac{\partial f}{\partial p}(y,\,p).
\end{equation}
Assuming sufficient smoothness \cite{Pruess2010}, the derivative of $y$ with respect to $p$ is the unique solution to \eqref{eq:differentiated_ode_with_parameters} together with an initial value. The implicit Euler scheme with step size $\stepsize$ applied to the coupled system formed by \eqref{eq:ode_with_parameters} and \eqref{eq:differentiated_ode_with_parameters} yields
\begin{equation}
	\label{eq:euler_primal}
	y_{n+1}=y_n+\stepsize f(y_{n+1},\,p),
\end{equation}
\vspace{-4ex}
\begin{multline}
	\label{eq:euler_derivative}
	\frac{\intd y_{n+1}}{\intd p}=\frac{\intd y_n}{\intd p}+\stepsize\frac{\partial f}{\partial y}(y_{n+1},\,p)\frac{\intd y_{n+1}}{\intd p}\\
	+\,\stepsize\frac{\partial f}{\partial p}(y_{n+1},\,p).
\end{multline}
Clearly, this can be solved in two stages. After a solve of the nonlinear equation \eqref{eq:euler_primal} for $y_{n+1}$, one linear solve of \eqref{eq:euler_derivative} is sufficient to recover the derivative $\frac{\intd y_{n+1}}{\intd p}$. However, \eqref{eq:euler_derivative} can equivalently be obtained in an algorithmic manner by differentiating \eqref{eq:euler_primal} with respect to $p$ as long as $\frac{\intd h}{\intd p}=0$. Hence, the algorithmic derivative of a single implicit Euler step has an interpretation as a single implicit Euler step applied to the ODE for the derivative.

This holds likewise for the single implicit Euler step applied in the schemes in Section \ref{subsec:single_implicit_euler_step} where we have already seen the two-step solution procedure. Now we deduce properties of the numerical tangent approximation via the ODE it approximates. Let $f(\varepsilon,\,a)=\frac{\partial\Psi}{\partial A}\of{-\frac{\partial\omega}{\partial a}(\varepsilon,\,a)}$ denote the right hand side of the evolution equation \eqref{eq:gsm_internal}. In the setting of Section \ref{subsec:single_implicit_euler_step}, we have to consider the numerical ODE solve in the context of a single material law evaluation with initial data $a_n$ and $\frac{\intd a_n}{\intd\varepsilon_{n+1}}=0$ with step size $\stepsize = t_{n+1}-t_n$ over the time interval $[t_n,\,t_{n+1}]$. Here, $\varepsilon_{n+1}$ plays the role of the parameters. The evolution equation $\dot{a}=f(\varepsilon(t),\,a)$ leads to the ODE
\begin{multline}
	\label{eq:problematic_ode}
	\frac{\intd}{\intd t}\of{\frac{\intd a}{\intd\varepsilon_{n+1}}}=\frac{\partial f}{\partial a}(\varepsilon(t),\,a)\frac{\intd a}{\intd\varepsilon_{n+1}}\\
	+\frac{\partial f}{\partial\varepsilon}(\varepsilon(t),\,a)\frac{\intd\varepsilon(t)}{\intd\varepsilon_{n+1}}
\end{multline}
for the derivative. By the properties of the implicit Euler scheme, the numerical solve of the undifferentiated evolution equation is guaranteed to converge with order one to the exact solution as $\stepsize\to 0$. Here, the user can influence accuracy by choosing smaller loading steps.

For the ODE for the derivative \eqref{eq:problematic_ode}, the situation is different. Independent of the loading step size, $\varepsilon_{n+1}$ always refers to the strain value at time $t_{n+1}$. The linear interpolation
\begin{equation*}
\varepsilon(t)=\varepsilon_n\cdot\frac{t_{n+1}-t}{\stepsize}+\varepsilon_{n+1}\cdot\frac{t-t_n}{\stepsize}
\end{equation*}
between the known strain values leads to
\begin{equation*}
\frac{\intd\varepsilon(t)}{\intd\varepsilon_{n+1}}=\frac{t-t_n}{\stepsize},
\end{equation*}
which is the linear interpolation between 0 and 1 over the integration interval $[t_n,\,t_{n+1}]$. Therefore, the ODE for this particular derivative changes its shape with $\stepsize$. As the ODE is not invariant with respect to the integration interval, we cannot expect convergence to the exact solution with $\stepsize\to 0$ if only a single implicit Euler step is applied.

The following example illustrates that the relative error in the differentiated internal variables might even increase for $\stepsize\to 0$. We compare the results obtained by single implicit Euler steps to the results obtained by implicit Euler with a simple step size control mechanism. Consider a single voxel of the elasto-viscoplastic material \eqref{eq:michel_suquet_omega}, \eqref{eq:michel_suquet_psi} with the parameters from Table \ref{table:misu_parameters}. We use the mixed boundary conditions from Figure \ref{fig:BoundaryCondition}. This loading path is discretized by varying numbers of equidistant loading steps. For each loading step, a material law evaluation with or without substeps is performed. The relative errors observed in the derivative $\frac{\intd \viscop[n+1,\,xx]{\varepsilon}}{\intd \varepsilon_{n+1,\,xx}}$ can be seen in Figure \ref{fig:relative_error_tangent}. Clearly, the relative error increases for $\stepsize\to 0$.

\begin{figure*}
	\begin{center}
		\includegraphics{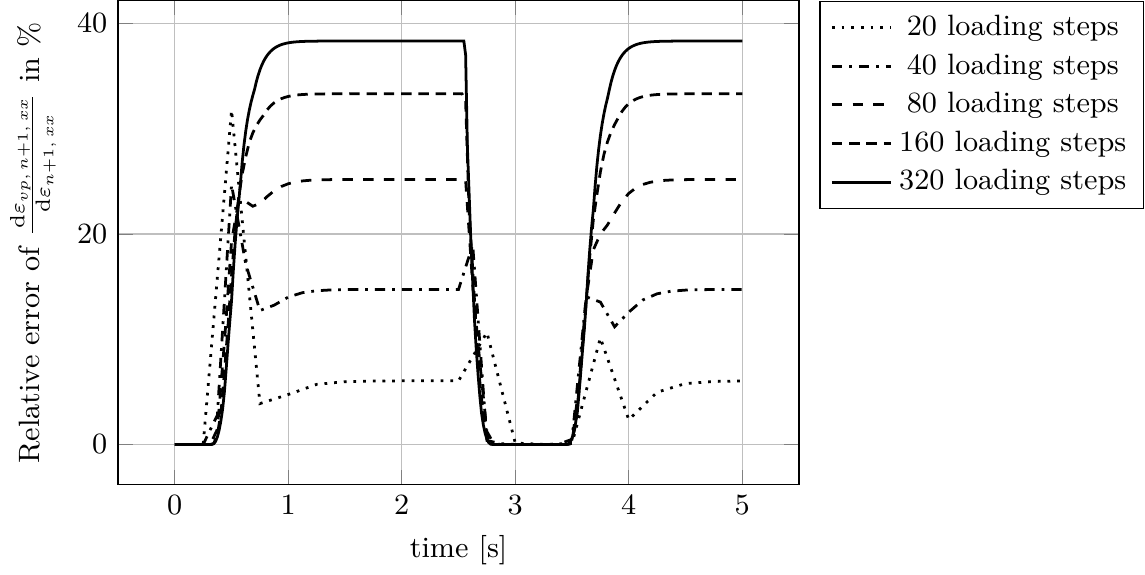}
	\end{center}
	\caption{Influence of loading step size on the relative error of a differentiated internal variable. Solution obtained by single implicit Euler steps compared to solution obtained by implicit Euler with adaptive substeps.}
	\label{fig:relative_error_tangent}
\end{figure*}

This shows that an accurate tangent evaluation cannot be performed without further discretization of the integration interval $[t_n,\,t_{n+1}]$ and serves as an additional motivation for adaptive substeps that are otherwise studied e.\,g.~in \cite{Arya1986} in the context of material law evaluation. Specifically, the material law inputs and outputs still follow the global time discretization, but locally, each material law evaluation uses a further discretization of $[t_n,\,t_{n+1}]$ to meet specified tolerances. In this section, we analyze well-known integration schemes with respect to automatic differentiation in the presence of step size control. \emph{Note that implicit Euler with adaptive steps is not used in the remaining parts of this paper; instead, schemes with step size control via an embedded method are considered.}

For adaptive time step sizes, the computation of the material tangent still requires the derivative of the evolved internal variables with respect to the predicted strain. Even if it is in principle possible to propagate those derivatives by AD through multiple steps of an ODE integration scheme in a blackbox manner, this corresponds to an algorithmic differentiation of an approximation and comprises a risk of inaccurate derivatives. The issues of blackbox differentiation of ODE integration schemes and possible solutions are discussed in \cite{Eberhard1999}. Particularly, two problems are mentioned. First, the step size is solely determined by the integration of the primal equation. Hence, there are no guarantees for the accuracy of the derivatives. Second, the differentiation of the step size control mechanism spoils the result with discretization dependent components. In \cite{Eberhard1999}, the focus is on an aposteriori error correction that recovers the desired derivatives from quantities obtained by blackbox differentiation. Here, we study the continuous approach to the problem in greater detail and refine the strategy of solving simultaneously an ODE for the derivative for the case of Rosenbrock methods and both explicit and implicit Runge-Kutta schemes in the presence of step size control. In Theorem \ref{theorem:ad_integration_step} and Corollary \ref{corollary:ad_integration_scheme}, we show that the ansatz is equivalent to suitably modified blackbox differentiation. Particularly, we guarantee that the derivatives are as accurate as the primal solutions.

For the sake of notational simplicity, we develop the following theory for autonomous ODEs and require implicitly that the used integration schemes satisfy the consistency condition that they yield the same numerical solution before and after transformation of the ODE to autonomous form.

Assuming sufficient smoothness \cite{Pruess2010}, the derivative of $y$ with respect to $p$ is the unique solution to \eqref{eq:differentiated_ode_with_parameters}.
The combined system \eqref{eq:ode_with_parameters} and \eqref{eq:differentiated_ode_with_parameters} inherits the stability properties of \eqref{eq:ode_with_parameters} in the sense that the Jacobian of the right hand side with respect to the unknowns is of block type
\begin{multline}
	\label{eq:combined_jacobian}
	\frac{\partial}{\partial\begin{bmatrix}y & \frac{\intd y}{\intd p}\end{bmatrix}}\begin{bmatrix}f(y,\, p) \\ \frac{\partial f}{\partial y}(y,\,p)\frac{\intd y}{\intd p}+\frac{\partial f}{\partial p}(y,\,p)\end{bmatrix}\\
	=\begin{bmatrix} \frac{\partial f}{\partial y}(y,\,p) & 0 \\ \ast & \frac{\partial f}{\partial y}(y,\,p)\end{bmatrix}
\end{multline}
and has the same eigenvalues as $\frac{\partial f}{\partial y}(y,\,p)$.

For some classes of integration schemes, the simultaneous solve of \eqref{eq:ode_with_parameters} and \eqref{eq:differentiated_ode_with_parameters} can be realized by means of automatically differentiating the numerical solve of \eqref{eq:ode_with_parameters} with respect to $p$ in a blackbox manner with some additional adaptions.
Let an integration scheme with $s$ stages and both linear and nonlinear implicit terms be specified by the update relations
\begin{align}
	\label{eq:general_stage_update}
	&k_i=\stepsize f\of{y_n^{(i)},\,p}+\stepsize J\sum_{j=1}^s\gamma_{ij}k_j,\\
	\label{eq:general_solution_update}
	&y_{n+1}=y_n+\sum_{j=1}^sb_jk_j
\end{align}
where $J=\frac{\partial f}{\partial y}(y_n,\,p)$ is the Jacobian of the right hand side and
\begin{equation*}
	y_n^{(i)}=y_n+\sum_{j=1}^sa_{ij}k_j.
\end{equation*}

\begin{theorem}
	\label{theorem:ad_integration_step}
	Let initial data $y_n$ and $\frac{\intd y_n}{\intd p}$ be given. The algorithmic derivative of a single step of the scheme \eqref{eq:general_stage_update}, \eqref{eq:general_solution_update} with step size $\stepsize$ applied to \eqref{eq:ode_with_parameters} yields the same value $\frac{\intd y_{n+1}}{\intd p}$ as an application of the same integration step to the combined system \eqref{eq:ode_with_parameters} and \eqref{eq:differentiated_ode_with_parameters} as long as $\frac{\intd \stepsize}{\intd p}=0$ and as long as the derivatives of equation solves are recovered according to the implicit function theorem. In terms of automatic differentiation, it is sufficient if $\stepsize$ does not carry derivative values and equation solves are treated as elementary operations.
\end{theorem}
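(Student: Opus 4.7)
The plan is to directly compare two single-step computations --- applying the scheme \eqref{eq:general_stage_update}, \eqref{eq:general_solution_update} to the combined system and algorithmically differentiating a single step of the scheme applied only to \eqref{eq:ode_with_parameters} --- and to match the resulting expressions stage by stage, with the help of the initial data identification $\frac{\intd y_n}{\intd p}=z_n$, where $z$ denotes the derivative variable.

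For the first computation, I would write the combined system as $\dot{Y}=\tilde f(Y,p)$ with $Y=(y,z)$ and $\tilde f(Y,p)=\bigl(f(y,p),\,\frac{\partial f}{\partial y}(y,p)z+\frac{\partial f}{\partial p}(y,p)\bigr)$, and then apply \eqref{eq:general_stage_update}, \eqref{eq:general_solution_update} to it. Splitting the stage vectors into $(k_i^y,k_i^z)$ and using that $\tilde f$ has no dependence on $z$ in its first component and linear dependence on $z$ in its second, the combined Jacobian has the block form \eqref{eq:combined_jacobian} with diagonal blocks $J=\frac{\partial f}{\partial y}(y_n,p)$, upper-right block zero, and lower-left block $\tilde J_{zy}=\frac{\partial^2 f}{\partial y^2}(y_n,p)z_n+\frac{\partial^2 f}{\partial p\,\partial y}(y_n,p)$. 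The $y$-components of the stages then reproduce the original scheme, while the $z$-components satisfy
\begin{multline*}
k_i^z=\stepsize\of{\frac{\partial f}{\partial y}\of{y_n^{(i)},p}z_n^{(i)}+\frac{\partial f}{\partial p}\of{y_n^{(i)},p}}\\
+\stepsize\,\tilde J_{zy}\sum_{j=1}^{s}\gamma_{ij}k_j^y+\stepsize\,J\sum_{j=1}^{s}\gamma_{ij}k_j^z,
\end{multline*}
with $z_n^{(i)}=z_n+\sum_j a_{ij}k_j^z$, and $z_{n+1}=z_n+\sum_j b_j k_j^z$.

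For the second computation, I would differentiate \eqref{eq:general_stage_update} and \eqref{eq:general_solution_update} with respect to $p$ by AD. Because $\frac{\intd\stepsize}{\intd p}=0$ by assumption, the chain rule yields
\begin{multline*}
\frac{\intd k_i}{\intd p}=\stepsize\of{\frac{\partial f}{\partial y}\of{y_n^{(i)},p}\frac{\intd y_n^{(i)}}{\intd p}+\frac{\partial f}{\partial p}\of{y_n^{(i)},p}}\\
+\stepsize\,\frac{\intd J}{\intd p}\sum_{j=1}^{s}\gamma_{ij}k_j+\stepsize\,J\sum_{j=1}^{s}\gamma_{ij}\frac{\intd k_j}{\intd p},
\end{multline*}
together with $\frac{\intd y_n^{(i)}}{\intd p}=\frac{\intd y_n}{\intd p}+\sum_j a_{ij}\frac{\intd k_j}{\intd p}$ and $\frac{\intd y_{n+1}}{\intd p}=\frac{\intd y_n}{\intd p}+\sum_j b_j\frac{\intd k_j}{\intd p}$.

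The conceptual heart of the proof --- and the main obstacle I anticipate --- is the identification $\frac{\intd J}{\intd p}=\tilde J_{zy}$. Since $J=\frac{\partial f}{\partial y}(y_n,p)$ is evaluated at $(y_n,p)$, the chain rule forces $\frac{\intd J}{\intd p}=\frac{\partial^2 f}{\partial y^2}(y_n,p)\frac{\intd y_n}{\intd p}+\frac{\partial^2 f}{\partial p\,\partial y}(y_n,p)$, which under $\frac{\intd y_n}{\intd p}=z_n$ coincides precisely with the off-diagonal block of the combined Jacobian rather than producing any spurious discretization-dependent term. Once this is in place, together with the induced identifications $\frac{\intd k_i}{\intd p}=k_i^z$ and $\frac{\intd y_n^{(i)}}{\intd p}=z_n^{(i)}$, the two sets of stage equations agree term by term. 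It then remains to invoke the assumption on equation solves: whether the stages arise from a linear solve (Rosenbrock) or a nonlinear solve (implicit Runge-Kutta), implicit-function-theorem differentiation produces exactly the linear system that the combined scheme solves for $k_i^z$, so the solutions coincide; for explicit Runge-Kutta, $\gamma_{ij}=0$ and the match is immediate by induction on $i$. The update relation finally gives $\frac{\intd y_{n+1}}{\intd p}=z_{n+1}$, as asserted.
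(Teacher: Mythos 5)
Your proposal is correct and follows essentially the same route as the paper's proof: both reduce the claim to a stage-by-stage comparison of $k_i^z$ (the paper's $\tilde{k}_i$) with $\frac{\intd k_i}{\intd p}$, hinge on the identification of $\frac{\intd J}{\intd p}$ with the lower-left block of the combined Jacobian \eqref{eq:combined_jacobian}, and use $\frac{\intd\stepsize}{\intd p}=0$ together with implicit-function-theorem treatment of the equation solves to match the two systems. Your write-up is somewhat more explicit (block splitting of $\tilde f$, the induction for the explicit case), but there is no substantive difference in the argument.
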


\begin{proof}
For notational simplicity, let $p$ be scalar. Let $y$ denote the solution to \eqref{eq:ode_with_parameters}, $\frac{\intd y}{\intd p}$ its algorithmic derivative with respect to $p$ and $\begin{bmatrix}y & \tilde{y}\end{bmatrix}$ the solution to the combined system \eqref{eq:ode_with_parameters} and \eqref{eq:differentiated_ode_with_parameters}. Likewise, we refer to the stage vectors for the solution step of \eqref{eq:ode_with_parameters} as $k_i$ and to the stage vectors for the solution step of the combined system as $\begin{bmatrix}k_i & \tilde{k}_i\end{bmatrix}$. 
By the linearity of \eqref{eq:general_solution_update} and the initial value relation $\tilde{y}_n=\frac{\intd y_n}{\intd p}$, it is sufficient to ensure that $\tilde{k}_i=\frac{\intd k_i}{\intd p}$, $i=1,\,\ldots,\,s$. If we apply the integration step to the combined ODEs \eqref{eq:ode_with_parameters} and \eqref{eq:differentiated_ode_with_parameters}, the equations for the stage vector components $\tilde{k}_i$ read
\begin{multline}
	\label{eq:derivative_stage_vector_components}
	\tilde{k}_i=\stepsize\frac{\partial f}{\partial y}\of{y_n^{(i)},\,p}\tilde{y}_n^{(i)}+\stepsize\frac{\partial f}{\partial p}\of{y_n^{(i)},\,p}\\
	+\stepsize\tilde{J}\sum_{j=1}^s\gamma_{ij}k_j+\stepsize J\sum_{j=1}^s\gamma_{ij}\tilde{k}_j,
\end{multline}
where
\begin{equation*}
\tilde{J}=\frac{\partial}{\partial y}\of{\frac{\partial f}{\partial y}(y_n,p)\tilde{y}_n+\frac{\partial f}{\partial p}(y_n,\,p)}
\end{equation*}
is the lower left block of \eqref{eq:combined_jacobian} evaluated at $y_n$, $\tilde{y}_n$ and $p$.
However, as long as $\frac{\intd h}{\intd p}=0$, the same system of equations is obtained if we differentiate both sides of \eqref{eq:general_stage_update} with respect to $p$ and identify $\tilde{k}_i=\frac{\intd k_i}{\intd p}$. To that end, note $\tilde{J}=\frac{\intd J}{\intd p}$.
Hence, if we recover the algorithmic derivative of the $k_i$ 
from solves of the equations obtained by implicit differentiation, we obtain the same result as by solving an ODE for the derivative.\qed
\end{proof}

In the case of prescribed step sizes, Theorem \ref{theorem:ad_integration_step} extends inductively to multiple subsequent integration  steps. In the case of automatic step size control, for example via an embedded method according to \cite{Hairer1993}, the same holds true after small additional modifications.
\begin{enumerate}
	\item To meet the assumption $\frac{\intd \stepsize}{\intd p}=0$ of Theorem \ref{theorem:ad_integration_step} in terms of AD, the step size control mechanism must remain undifferentiated.
	\item To achieve the same accuracy for the solution components $y$ and $\frac{\intd y}{\intd p}$, all of them must be regarded in the step size control error measure.
\end{enumerate}
These additional modifications can also be found among the general suggestions in \cite{Eberhard1999}. Here, we have shown that they are --- together with the appropriate treatment of equation solves --- sufficient to turn blackbox differentiation of an ODE integration scheme of the type \eqref{eq:general_stage_update}, \eqref{eq:general_solution_update} into an algorithm that is equivalent to solving an ODE for the derivative.

\begin{corollary}
	\label{corollary:ad_integration_scheme}
	Theorem \ref{theorem:ad_integration_step} generalizes to subsequent integration steps also in the presence of automatic step size control as long as step size control is excluded from differentiation and derivative components are regarded in the error measure. The obtained derivative is as accurate as the primal solution.
\end{corollary}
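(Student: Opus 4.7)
The plan is to proceed by induction on the number of accepted integration steps. The base case is handled directly by Theorem \ref{theorem:ad_integration_step}: the two bulleted modifications preceding the corollary ensure its hypotheses, namely, not differentiating the step size control gives $\frac{\intd \stepsize}{\intd p}=0$ in the algorithmic sense, and treating equation solves as elementary operations (with derivatives recovered via the implicit function theorem) takes care of the nonlinear and linear systems appearing in \eqref{eq:general_stage_update}.

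For the inductive step, assume after $n$ accepted steps that the value obtained by blackbox algorithmic differentiation of the solver agrees with the value obtained by integrating the combined system \eqref{eq:ode_with_parameters}, \eqref{eq:differentiated_ode_with_parameters} with the same scheme and the same step sequence. The inputs of the next step are $y_n$ together with $\frac{\intd y_n}{\intd p}$; by the inductive hypothesis the latter matches the $\tilde{y}_n$ of the combined system. Because $\stepsize_{n+1}$ is produced by an undifferentiated controller, $\frac{\intd \stepsize_{n+1}}{\intd p}=0$ and Theorem \ref{theorem:ad_integration_step} applies once more to yield the identity $\frac{\intd y_{n+1}}{\intd p}=\tilde{y}_{n+1}$, closing the induction. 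The accuracy assertion follows from the equivalence just established: the error in $\frac{\intd y_{n+1}}{\intd p}$ is exactly the error committed by the integration scheme on the combined system, whose Jacobian has the block structure \eqref{eq:combined_jacobian} and hence inherits the stability properties used to justify the scheme in the first place. Requiring the derivative components to participate in the step size control error measure means the embedded estimator constrains local errors in $y$ and $\frac{\intd y}{\intd p}$ simultaneously, so both components satisfy the same tolerance and share the same order of convergence.

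The main obstacle I anticipate is ensuring that the step sequence itself is identical in the two interpretations. Step acceptance depends on the estimator output; if the estimator only weighed $y$, then blackbox differentiation and the combined-system view would use different selected step sizes once accumulated round-off nudges one away from the other, breaking the clean inductive identity. This is exactly why the second bulleted modification is needed: once the estimator weighs the augmented state, the controller produces the same $\stepsize_{n+1}$ in both interpretations because it sees the same augmented state, and the induction propagates without ambiguity. A minor additional point worth spelling out is that rejected trial steps are handled symmetrically, since a rejection simply re-enters the loop with a fresh $\stepsize$ determined by undifferentiated arithmetic on the estimator output, so no spurious derivative contributions arise from rejected attempts.
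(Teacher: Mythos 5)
Your proposal is correct and follows essentially the same route as the paper, which establishes the corollary by exactly this inductive extension of Theorem~\ref{theorem:ad_integration_step} across accepted steps, with the undifferentiated controller guaranteeing $\frac{\intd\stepsize}{\intd p}=0$ at every step and the augmented error measure delivering the accuracy claim. Your additional remarks on the identity of the step sequences in both interpretations and on rejected trial steps are consistent with (and slightly more explicit than) the paper's discussion.
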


Theorem \ref{theorem:ad_integration_step} and Corollary \ref{corollary:ad_integration_scheme} cover various classes of well-known integration schemes. If we choose $a_{ij}=0$ for $j\geq i$ and $\gamma_{ij}=0$ for $j>i$, \eqref{eq:general_stage_update} and \eqref{eq:general_solution_update} turn into a Rosenbrock scheme \cite{Hairer2010}. There, only linear implicit terms are used and \eqref{eq:derivative_stage_vector_components} can be simplified to $s$ linear solves
\begin{multline}
	\label{eq:rosenbrock_combined_update}
	(I-\gamma_{ii}\stepsize J)\tilde{k}_i=\stepsize\frac{\partial f}{\partial y}\of{y_n^{(i)},\,p}\tilde{y}_n^{(i)}\\
	+\stepsize\frac{\partial f}{\partial p}\of{y_n^{(i)},\,p}+\stepsize\tilde{J}\sum_{j=1}^i\gamma_{ij}k_j+\stepsize J\sum_{j=1}^{i-1}\gamma_{ij}\tilde{k}_j.
\end{multline}
The solve for $\tilde{k}_i$ can be performed immediately after the solve for $k_i$. For the choice $\gamma_{ij}=0$ for all $i$ and $j$, we obtain an implicit Runge-Kutta scheme \cite{Hairer1993}. The implicit Euler step discussed at the beginning of this section is an example for this and hence a special instance of Theorem \ref{theorem:ad_integration_step}. If additionally $a_{ij}=0$ for $j\geq i$, we obtain an explicit Runge-Kutta scheme \cite{Hairer1993}. There, no equation solves are required and Theorem \ref{theorem:ad_integration_step} simplifies to a straightforward application of forward AD to the stage vector updates. Otherwise, AD can be used to compute the derivatives required in the setup of \eqref{eq:derivative_stage_vector_components}.

In the GSM context, the components of $\varepsilon_ {n+1}$ play the role of the parameter $p$, $a_{n+1}$ corresponds to $y$ and $f$ is the right hand side of \eqref{eq:gsm_internal}. We apply Corollary \ref{corollary:ad_integration_scheme} for the computation of $\frac{\mathrm{d}a_{n+1}}{\mathrm{d}\varepsilon_{n+1}}$. For each class of integration schemes, the AD tool must be capable of computing various higher order derivatives. For explicit Runge-Kutta schemes, as before, we need one derivative order for the computation of the material tangent and one for the evaluation of the partials. For Rosenbrock methods, however, the computation of the Jacobian of the right hand side requires an additional derivative order. This is due to the term $\tilde{J}=\frac{\mathrm{d}J}{\mathrm{d}p}=\frac{\intd}{\intd p}\frac{\partial}{\partial a}\of{\frac{\partial\Psi}{\partial A}\of{\dots}}$ in \eqref{eq:rosenbrock_combined_update}. It is in principle possible to extend the AD tool presented Section \ref{section:automatic_differentiation_on_gpus} to third order derivatives. However, additional derivative orders incur an exponential increase in memory and/or runtime \cite{Griewank2008} and we do not expect reasonable performance. Thus, to recover one derivative order, the user has to implement the partials of $\omega$ and $\Psi$ explicitly in this case, i.\,e.~only the semi-automatic evaluation strategy is available.

We consider the pair of explicit Runge-Kutta schemes from \cite{Bogacki1989} that is known from MATLAB's routine ode23 and a lower-order Runge-Kutta pair formed by the explicit Euler scheme and Heun's method. This pair is also used for DAE integration in the context of material law evaluation in \cite{Hiley2008} and we refer to it as ode12. Finally, we include the Rosenbrock scheme from \cite{Shampine1997} that is behind MATLAB's ode23s. We implement all three with automatic step size control according to \cite{Hairer1993} and keep the MATLAB default tolerances $a_\text{tol}=10^{-6}$ and $r_\text{tol}=10^{-3}$. If we solve additionally for the derivatives, the solutions for the derivative of $a$ with respect to $\varepsilon_{n+1}$ enter the error measure in the same way as primal solution components.

Table \ref{table:advanced_integrators_timings} displays the timings for Runge-Kutta and Rosenbrock evaluation strategies. Compared to the previous timings in Table \ref{table:implicit_euler_timings} without adaptive step size control, we take notice that on the CPU, semi-automatic evaluations without tangent with ode12 and especially ode23 can be performed even faster than the conventional evaluation strategy. Often, one or a few adaptive steps are sufficient, and Runge-Kutta steps are computationally cheaper than those of implicit schemes since no equation solves are involved. Note that semi-automatic evaluation without tangent does not require AD. Material law evaluations with adaptive step size and tangent are quite expensive. This is attributed to the effort of solving an ODE coupled with one for the derivative components. Again, the GPU improves the performance significantly, especially for evaluations with tangents. While there are no significant performance differences without tangent, ode23 is fastest with tangent, and is also competitive to the implicit Euler scheme on the CPU. Also, semi-automatic evaluation improves performance mostly on the CPU, and automatic evaluation has insignificant performance drawbacks on the GPU. The bad tangent performance of ode23s is related to register usage; this is explained in Section \ref{section:layout_profiling_limits}.

\begin{table*}
\centering
\begin{tabular}{c c c c c@{}}
\toprule
\textbf{architecture} & \textbf{ODE solver} & \textbf{evaluation strategy} & \makecell{\textbf{material law [s]}\\no tangent} & \makecell{\textbf{material law [s]}\\tangent}\tabularnewline
\midrule
CPU & ode12 & automatic & 2062.8 & 9070.5 \tabularnewline
CPU & ode12 & semi-automatic & 746.5 & 1844.7 \tabularnewline
CPU & ode23 & automatic & 1330.5 & 1683.9 \tabularnewline
CPU & ode23 & semi-automatic & 556.9 & 395.7 \tabularnewline
CPU & ode23s & semi-automatic & 1472.8 & 2885.7 \tabularnewline
\midrule
GPU & ode12 & automatic & 237.3 & 133.6 \tabularnewline
GPU & ode12 & semi-automatic & 247.4 & 121.3 \tabularnewline
GPU & ode23 & automatic & 238.1 & 73.2 \tabularnewline
GPU & ode23 & semi-automatic & 229.1 & 63.3 \tabularnewline
GPU & ode23s & semi-automatic & 235.1 & 508.5 \tabularnewline
\bottomrule
\end{tabular}
\caption{Total time spent on both types of material law evaluations with Runge-Kutta and Rosenbrock schemes. Compare also Table \ref{table:implicit_euler_timings}.}
\label{table:advanced_integrators_timings}
\end{table*}
\begin{table}
\centering
\begin{tabular}{c c@{}}
\toprule
\textbf{ODE solver} & \textbf{number of iterations} \tabularnewline
impl. Euler & 1681 \tabularnewline
ode12 & 1571 \tabularnewline
ode23 & 1527 \tabularnewline
ode23s & 1554 \tabularnewline
\bottomrule
\end{tabular}
\caption{Impact of ODE solver choice on number of elasticity solver iterations. The distinction CPU/GPU and the evaluation strategy types have no influence in this regard.}
\label{table:integrators_iterations}
\end{table}

\begin{figure*}[!ht]
	\centering
	\includegraphics{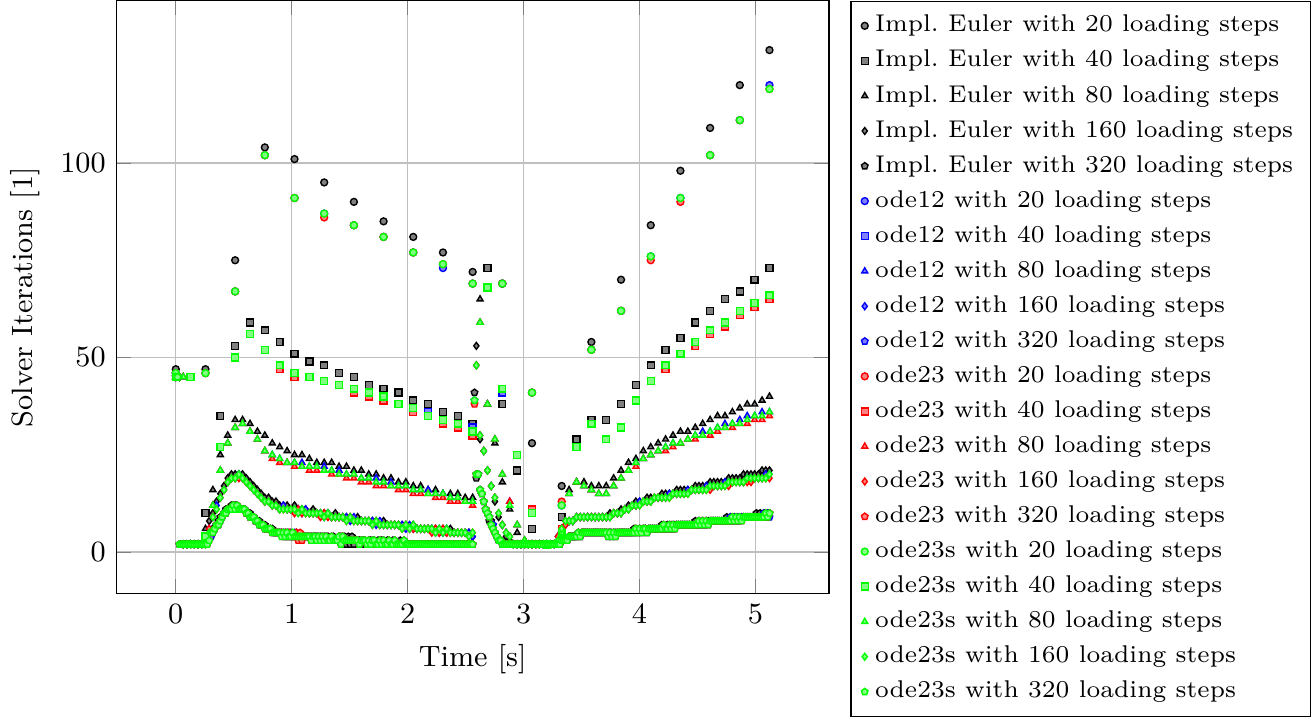}
    \caption{Iterations of FFT-based homogenization per loading step for the example of Section \ref{section:example}.}
    \label{figure:Results_Michel_Suquet_Composite_SolverIterations}
\end{figure*}

As can be seen in Table \ref{table:integrators_iterations} for the case of 80 loading steps, adaptive substeps tend to reduce the overall number of elasticity solver iterations so that there are less material law evaluations without tangent in total. Figure \ref{figure:Results_Michel_Suquet_Composite_SolverIterations}, however, reveals that the loading step size remains --- consistently across all ODE solvers --- the key influence factor on the number of iterations per loading step.

\begin{figure*}[!ht]
	\centering
    \includegraphics{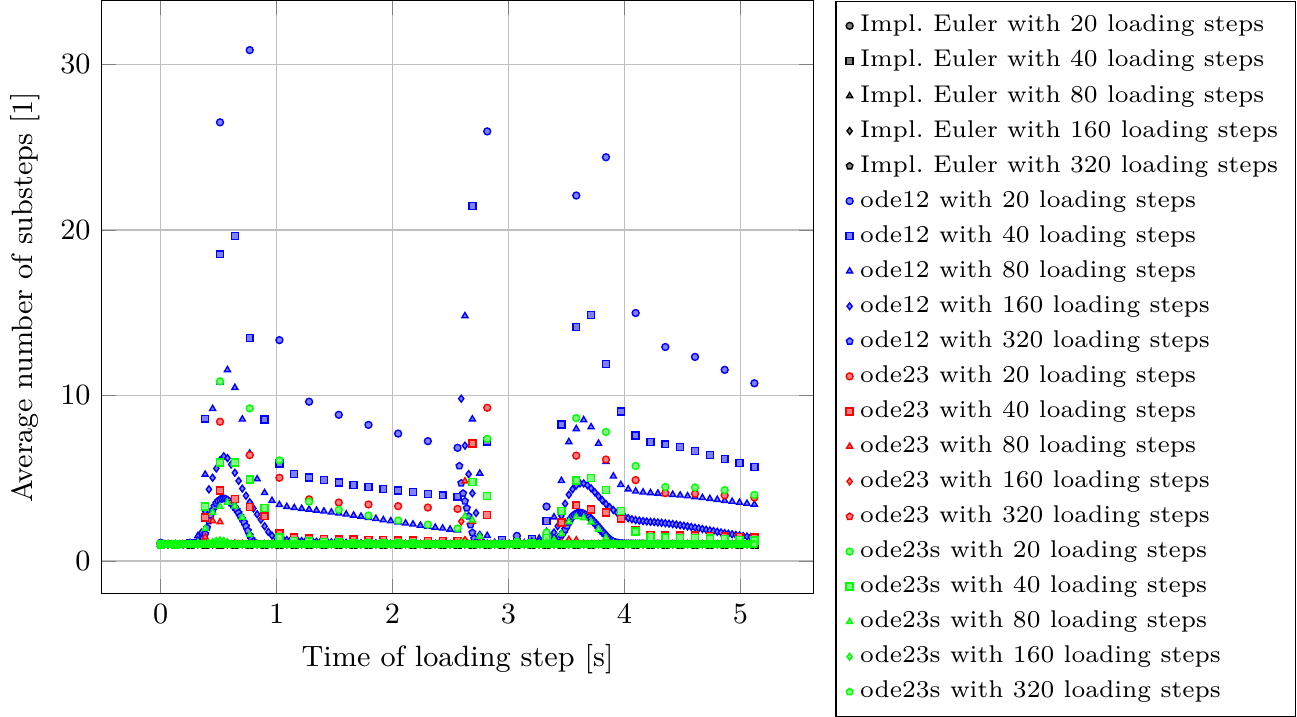}
    \caption{Spatially averaged number of ODE solver substeps per loading step for the example of Section \ref{section:example}. For each loading step, the number of substeps is plotted against the loading step's time.}
    \label{figure:Results_Michel_Suquet_Composite_substeps}
\end{figure*}

In Figure \ref{figure:Results_Michel_Suquet_Composite_substeps}, the average number of substeps per loading steps are visualized for the four different ODE solvers. By design, implicit Euler always uses one substep per loading step. For the other three solvers, the average number of substeps varies. It is strongly increasing when nonlinear effects occur in the composite. As expected, the first/second order solver ode12 needs the most substeps to reach the prescribed accuracy. The second/third order solvers ode23 and ode23s need a comparable number of substeps. Consequently, the semi-implicit and computationally more expensive ode23s cannot outperform the explicit ode23.

\subsection{Solution Accuracy}
\label{subsection:solution_accuracy}

FFT-based homogenization of Moulinec-Suquet \cite{Moulinec1998} applied to materials with nonlinear behaviour is subject to a spatial discretization error of the partial differential equation $\Div \sigma = 0$
investigated in detail by Schneider \cite{Schneider2015} and furthermore two types of time discretization errors. First, the interaction between different regions of the material (quadrature points) over time is neglected on the material law evaluation level. Second, each integration of the ordinary differential equations \eqref{eq:gsm_internal}, that is, each material law evaluation, introduces a local error in the internal variables.

For our example presented in Section \ref{section:example}, we study the influence of the adaptive time step size control on the overall error by comparing the ODE solvers presented above.

The stress response in loading direction is shown in Figure \ref{figure:Results_Michel_Suquet_Composite_sigma_xx}. As expected, due to the error control, all ODE solvers with adaptive time steps predict the same effective stress response within the given tolerances. Moreover, as can be seen in Figure \ref{figure:Error_Michel_Suquet_Composite_sigma_xx}, the error for coarse loading steps is reduced to approximately 30\% of the error of the implicit Euler solver. Thus, the error of the material law evolution, that is, the accuracy of the ODE solver, is dominating the overall error of the FFT-based based homogenization for this example.

\begin{figure*}[!ht]
	\centering
	\includegraphics{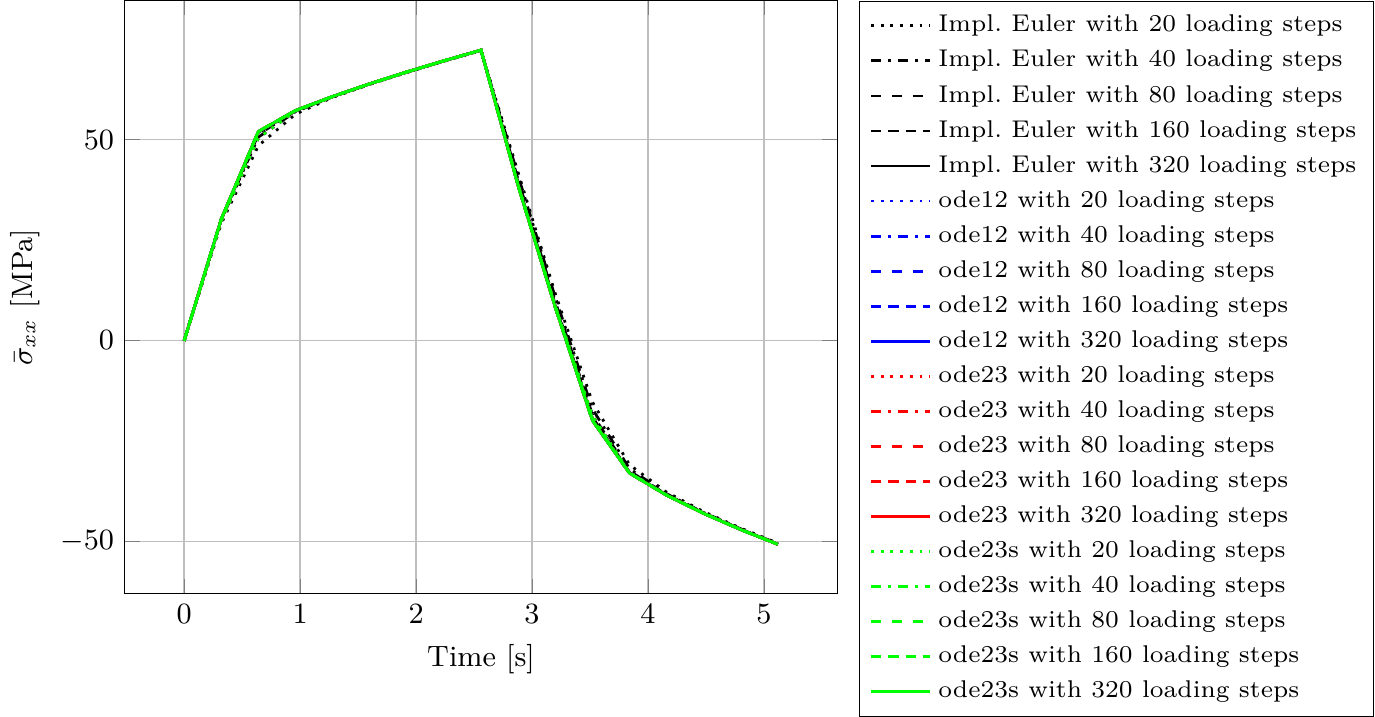}
    \caption{$\bar{\sigma}_{xx}$ for the example of Section \ref{section:example}.}
    \label{figure:Results_Michel_Suquet_Composite_sigma_xx}
\end{figure*}

\begin{figure*}[!ht]
    \centering
    \includegraphics{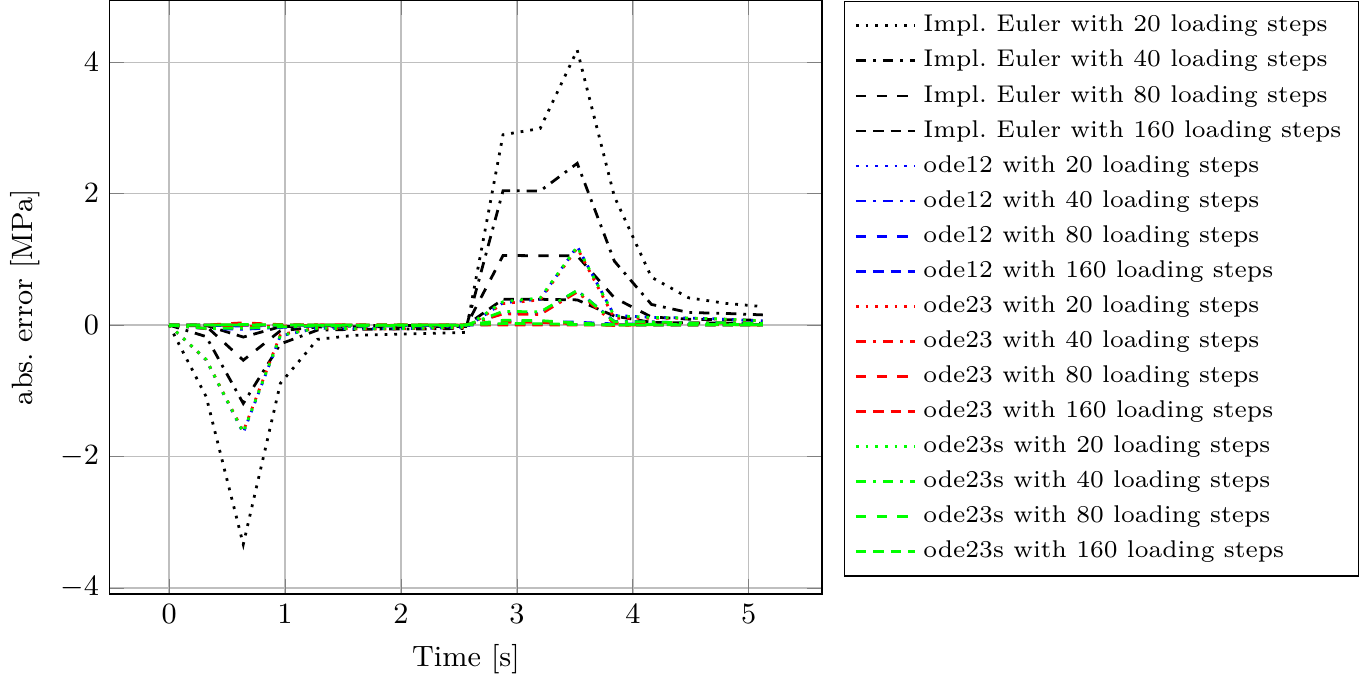}
    \caption{Difference of $\bar{\sigma}_{xx}$ to finest time discretization (320 loading steps) for the different ODE solvers.}
    \label{figure:Error_Michel_Suquet_Composite_sigma_xx}
\end{figure*}

For the tangential stiffness shown in Figures \ref{figure:Results_Michel_Suquet_Composite_C11} and \ref{figure:Results_Michel_Suquet_Composite_C12}, the results depend on the time discretization as explained in detail in Section \ref{subsection:rosenbrock_runge_kutta}. Therefore, we cannot perform a convergence analysis with respect to the loading step size. We observe that all ODE solvers with adaptive time step size control predict almost the same algorithmic tangent due to the error control. The differences observed between single implicit Euler steps and schemes with adaptive substeps are in accordance with the example on the relative error amplification in Section \ref{section:automatic_evaluation}. Note that the tangent formula \eqref{eq:tangent_formula} reads for the potentials \eqref{eq:michel_suquet_omega} and \eqref{eq:michel_suquet_psi}
\begin{equation*}
	C_{n+1}=C^\elastic\of{I-\frac{\intd \viscop[n+1]{\varepsilon}}{\intd\varepsilon_{n+1}}},
\end{equation*}
that is, linear combinations of errors as depicted in Figure \ref{fig:relative_error_tangent} are substracted from the components of the elastic stiffness matrix. This effect regards voxels that follow the Michel Suquet law and can still be seen in the effective stiffness.

\begin{figure*}[!ht]
	\centering
    \includegraphics{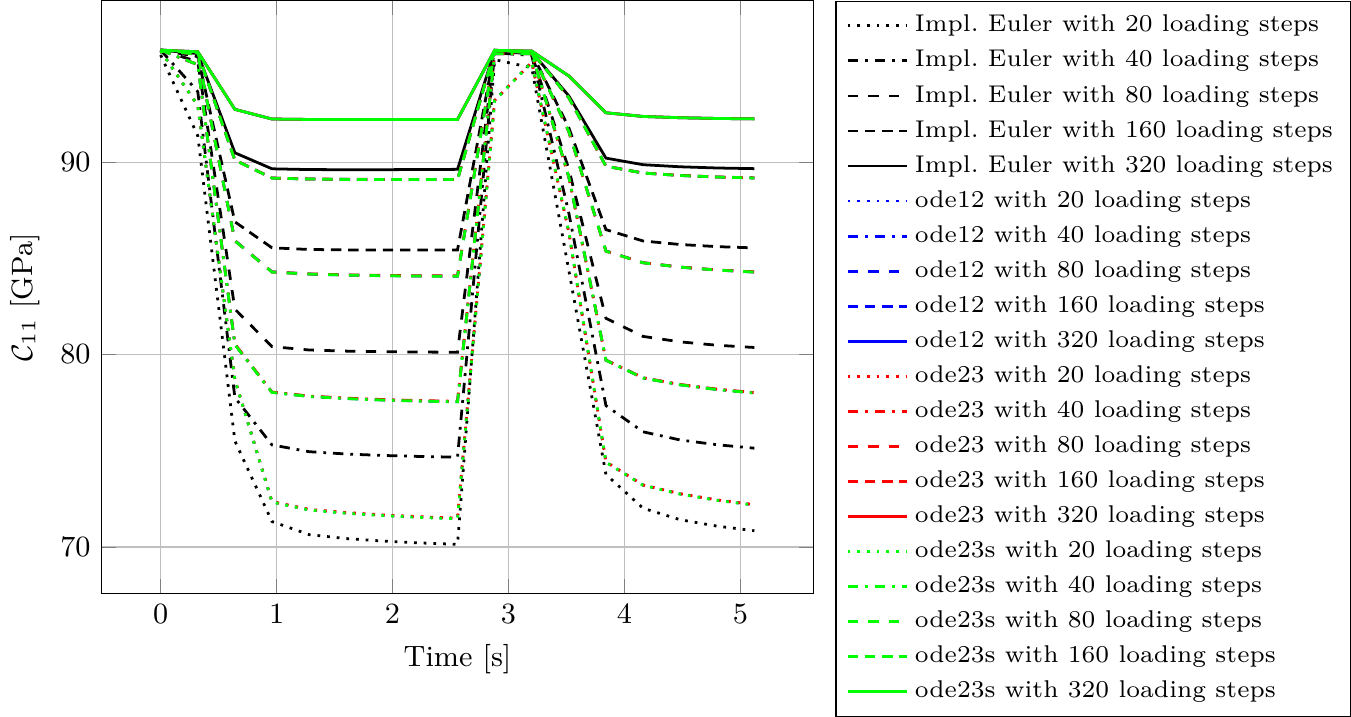}
    \caption{$\mathcal{C}_{11}$ for the example of Section \ref{section:example}.}
    \label{figure:Results_Michel_Suquet_Composite_C11}
\end{figure*}

\begin{figure*}[!ht]
	\centering
    \includegraphics{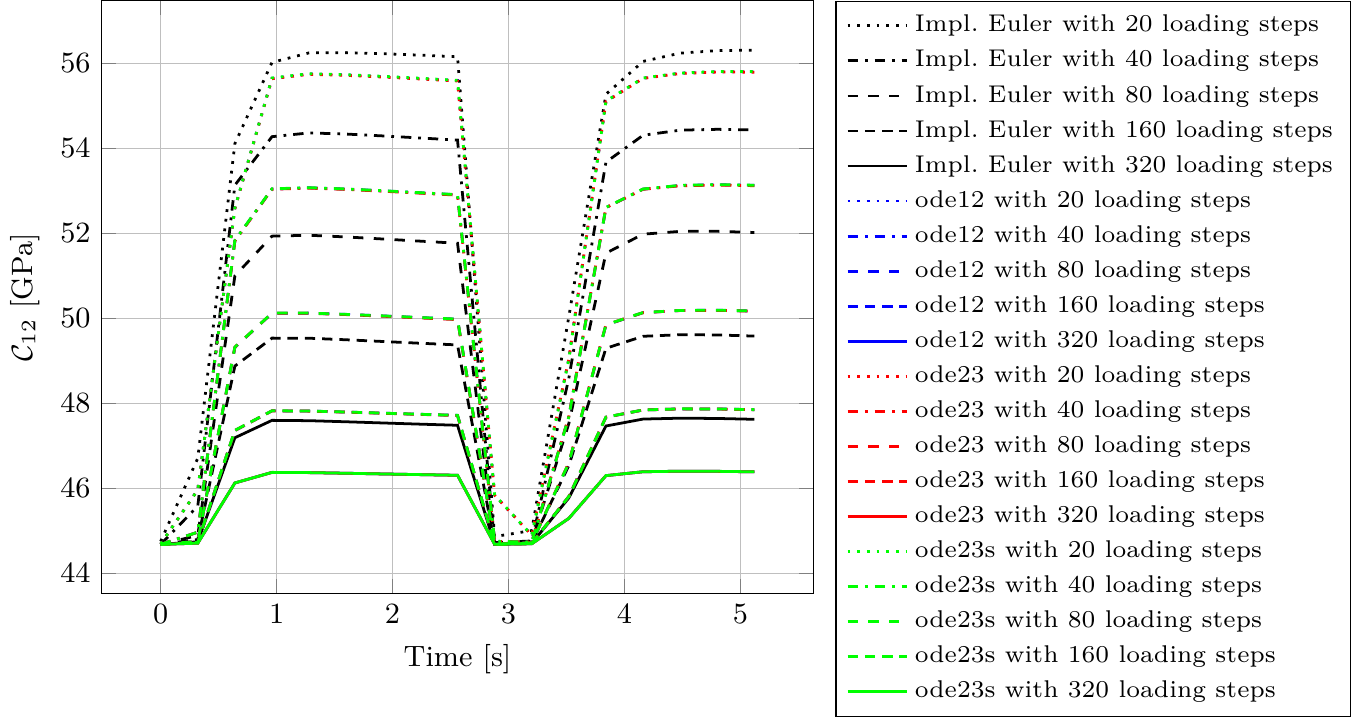}
    \caption{$\mathcal{C}_{12}$ for the example of Section \ref{section:example}.}
    \label{figure:Results_Michel_Suquet_Composite_C12}
\end{figure*}

\subsection{Stress-Driven Error Control}
\label{subsec:stress_driven_error_control}

Internal variables do not always have a physical meaning, and the material law outputs that are of immediate relevance to the elasticity solver are $\sigma_{n+1}$ and $C_{n+1}$. Its convergence test, for example, amounts to an equilibrium check of the stress field \cite{Moulinec1998}, and the material tangents are used to determine a linear elastic reference material \cite{Michel2001,Eisenlohr2013,Kabel2014}. In the material law evaluations, however, the tolerances specified for the ODE solver relate to an error in the internal variables. We control the error in $a_{n+1}$ and --- if we apply Corollary \ref{corollary:ad_integration_scheme} --- as well the error in $\frac{\mathrm{d}a_{n+1}}{\mathrm{d}\varepsilon_{n+1}}$.

In the GSM given by Equations \eqref{eq:michel_suquet_omega} and \eqref{eq:michel_suquet_psi}, for example, the stress relationship \eqref{eq:gsm_sigma} turns into
$\sigma=C^\elastic (\varepsilon - \viscop\varepsilon)$,
that is, any error in $\viscop\varepsilon$ enters $\sigma$ multiplied by the elastic stiffness tensor. Depending on the specific instance of $C^\elastic$, it might be necessary to adapt the tolerances of the ODE solver to end up with stress values that are sufficiently accurate for the PDE solver. This is avoided by an error control on the ODE level that is directly tied to the accuracy of the stresses.

The step size control mechanism from \cite{Hairer1993} captures the deviation between two ODE solutions of different order of convergence in an error measure. Depending on the error, steps are accepted or rejected and the step size is adapted accordingly. Instead of using the internal variable approximations directly in the error measure, we transform them together with the adequate linear interpolation between $\varepsilon_n$ and $\varepsilon_{n+1}$ for the substep of interest via the relationship \eqref{eq:gsm_sigma} into a pair of stresses. If $\sigma$ depends --- as above --- linearly or, more generally, Lipschitz on the internal variables, this yields a pair of stresses with the analogous order relations. The rationale of the step size control carries over, and we evaluate the error measure on the stresses instead. If we solve additionally for the derivative $\frac{\mathrm{d}a_{n+1}}{\mathrm{d}\varepsilon_{n+1}}$, the same evaluation of \eqref{eq:gsm_sigma} (performed on forward AD types instead) transforms additionally the approximations of the internal variable derivatives into a corresponding pair of material tangents that may then enter the error measure in the same way the derivative components did before. This way, we control the error in $\sigma_{n+1}$ and $C_{n+1}$.

As can be seen in Figure \ref{figure:Error_Michel_Suquet_Composite_sigma_xx_ErrorMeasure}, stress-driven error control also reduces the impact of the ODE solver choice on the effective stress response for all numbers loading steps.

\begin{figure*}[!ht]
	\centering
    \includegraphics{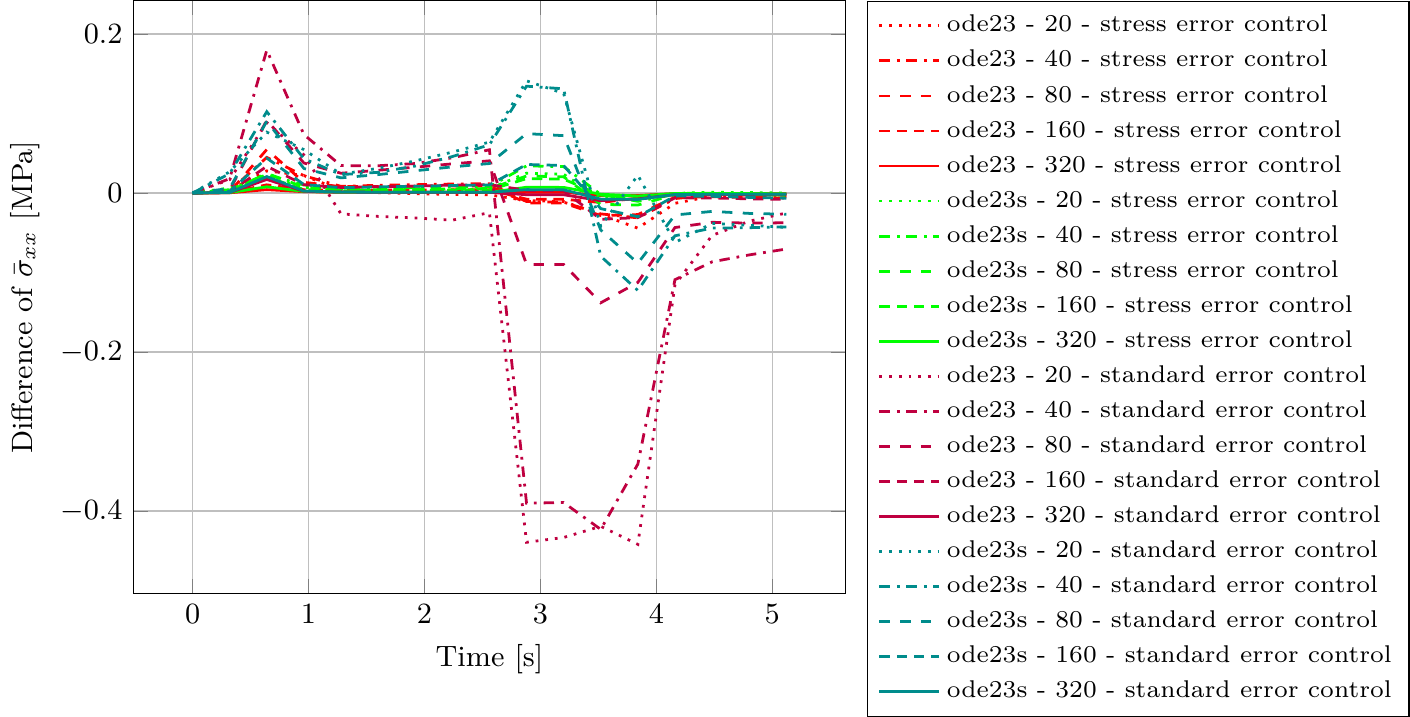}
    \caption{Difference of $\bar{\sigma}_{xx}$ for ode23 and ode23s compared to ode12 at the same loading path discretization and with the same error measure for step size control.}
    \label{figure:Error_Michel_Suquet_Composite_sigma_xx_ErrorMeasure}
\end{figure*}

Similar ideas can be employed for the convergence criterion of Newton's method in the schemes from Section \ref{subsec:single_implicit_euler_step}. Instead of iterating until convergence in $a$, we may compute the stress resulting from the current iterate via \eqref{eq:gsm_sigma} in each Newton iteration and converge $\sigma$ instead.

\section{Automatic Differentiation on GPUs}

\label{section:automatic_differentiation_on_gpus}

To summarize the basic ideas of automatic differentiation, we view a floating point computation with fully evaluated control flow as a function $x\mapsto y$ that is composed of elementary mathematical operations like $+$, $\cdot$ or standard math library functions like $\sin$. If we differentiate the composed operations according to the chain rule, we obtain the \emph{algorithmic derivative} of the computer program. \emph{Automatic differentiation} deals with techniques that obtain algorithmic derivatives in an automatic fashion. A comprehensive introduction is given in \cite{Griewank2008}.

As both $\omega$ and $\Psi$ are scalar valued functions and have --- with respect to both $\varepsilon$ and $a$ --- more inputs than outputs, it seems appropriate to use the \emph{reverse mode of automatic differentiation} to evaluate the partial derivatives on the right hand sides of the GSM constitutive equations \eqref{eq:gsm_sigma} and \eqref{eq:gsm_internal}. $C_{n+1}$, on the other hand, arises as the derivative of $\sigma_{n+1}$ with respect to $\varepsilon_{n+1}$, that is, six Voigt components with respect to six Voigt components. We compute it with the \emph{forward mode of automatic differentiation}, possibly the \emph{forward vector mode}. To compute both the partials and $C_{n+1}$ with AD at the same time, we combine the forward and reverse mode in an adjoints of tangents fashion \cite{Griewank2008}. While the computation is generally executed on a forward AD data type, all local evaluations of partials are obtained by additional applications of the reverse mode. In the context of semi-automatic ode23s, we use the second order forward (vector) mode for the Jacobians and tangents.

The implementation of the first and second order forward (vector) mode follows the same principles as CPU implementations like \cite{Sagebaum2019}. The reverse mode of AD, however, is subject to a global information problem that is typically solved by \emph{taping}. The sequence of operations is first executed in forward direction and remembered together with all intermediate results. Then, the corresponding sequence of derivatives is evaluated according to the chain rule in reverse order. On the GPU, this memory-intensive approach is prohibitive. Since the reverse mode of AD is only needed in a very local manner, we may replace taping by recomputations: If an intermediate value is required during reverse evaluation, the sequence of operations is partially re-evaluated in forward direction up to the required point. Similar approaches are pursued in \cite{Leppkes2017}.

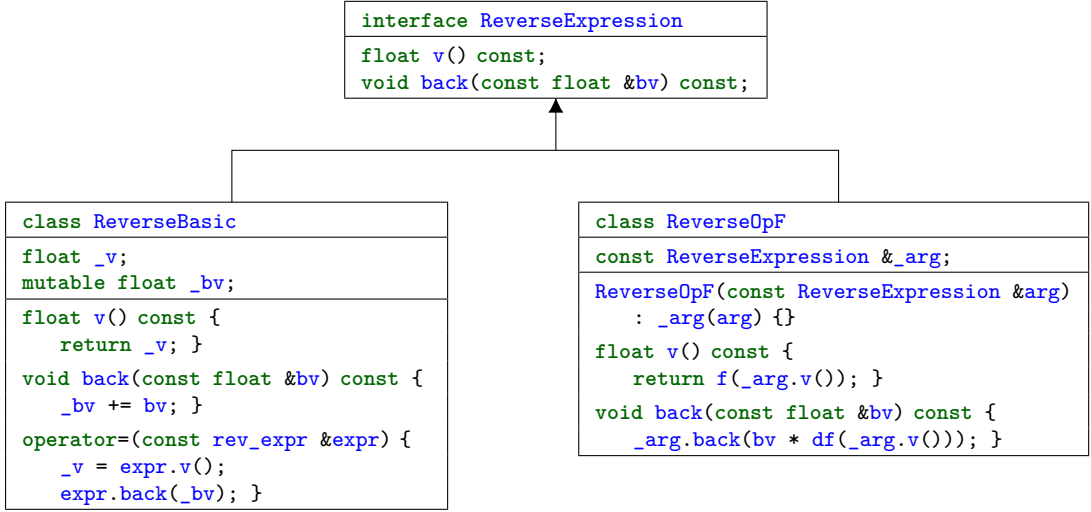
\begin{figure*}
\begin{center}
\begin{tabular}{|l|}
	\hline
	\makecell[l]{\lstinline{interface ReverseExpression}} \Tstrut\Bstrut\\\hline
	\makecell[l]{\lstinline{float v()} \lstinline{const;}} \Tstrut\\
	\makecell[l]{\lstinline{void back(const float &bv)} \lstinline{const;}}\Bstrut\\\hline
\end{tabular}\strut\par
\tikz[remember picture, overlay] \node [coordinate] (rev_expr_bottom) at (0,0.05\arraystretch) {};
\end{center}\strut\par
\vspace{0.5cm}
\begin{minipage}[t]{0.42\textwidth}
\tikz[remember picture, overlay] \coordinate (rev_basic_top) at (0.5\textwidth,0.25\arraystretch);
\begin{tabular}[t]{|l|}
	\firsthline
	\makecell[l]{\lstinline{class ReverseBasic}} \Tstrut\Bstrut\\\hline
	\makecell[l]{\lstinline{float _v;}} \Tstrut\\
	\makecell[l]{\lstinline{mutable float _bv;}} \Bstrut\\\hline
	\makecell[l]{\lstinline{float v()} \lstinline{const \{}} \Tstrut\\
	\hspace{0.5cm}\makecell[l]{\lstinline{return _v; \}}}\Bstrut\\
	\makecell[l]{\lstinline{void back(const float &bv)} \lstinline{const \{}} \Tstrut\\
	\hspace{0.5cm}\makecell[l]{\lstinline{_bv += bv; \}}}\Bstrut\\
	\makecell[l]{\lstinline{operator=(const rev_expr &expr)} \lstinline{\{}}\Tstrut\\
	\hspace{0.5cm}\makecell[l]{\lstinline{_v = expr.v();}}\\
	\hspace{0.5cm}\makecell[l]{\lstinline{expr.back(_bv); \}}}\Bstrut\\\hline
\end{tabular}
\end{minipage}
\hfill
\begin{minipage}[t]{0.48\textwidth}
\tikz[remember picture, overlay] \coordinate (rev_op_f_top) at (0.5\textwidth,0.25\arraystretch);
\begin{tabular}[t]{|l|}
	\firsthline
	\makecell[l]{\lstinline{class ReverseOpF}}\Tstrut\Bstrut\\\hline
	\makecell[l]{\lstinline{const ReverseExpression &_arg;}}\Tstrut\Bstrut\\\hline
	\makecell[l]{\lstinline{ReverseOpF(const ReverseExpression &arg)}}\Tstrut\\
	\hspace{0.5cm}\makecell[l]{\lstinline{: _arg(arg)} \lstinline{\{\}}}\Bstrut\\
	\makecell[l]{\lstinline{float v()} \lstinline{const \{}} \Tstrut\\
	\hspace{0.5cm}\makecell[l]{\lstinline{return f(_arg.v()); \}}}\Bstrut\\
	\makecell[l]{\lstinline{void back(const float &bv)} \lstinline{const \{}} \Tstrut\\
	\hspace{0.5cm}\makecell[l]{\lstinline{_arg.back(bv * df(_arg.v())); \}}}\Bstrut\\\hline
\end{tabular}\strut
\end{minipage}
\begin{tikzpicture}[overlay, remember picture]
	\node [below=0.7cm of rev_expr_bottom, coordinate] (helper) {};
	\draw (rev_basic_top) |- (helper);
	\draw (rev_op_f_top) |- (helper);
	\draw [-{Latex[length=2mm,width=2mm]}] (helper) -- (rev_expr_bottom);
\end{tikzpicture}
\caption{Schematic implementation of the reverse mode of AD on the expression level. \lstinline{_arg.back(bv * df(_arg.v()))} is the classical backpropagation formula \protect\cite{Griewank2008}.}
\label{figure:reverse_mode}
\end{figure*}

This can be realized by an \emph{operator overloading} ansatz at low computational overhead on the expression level. We employ \emph{expression template techniques} that have previously been shown to perform well for the treatment of right hand sides in the forward mode of AD \cite{Phipps2012} and in Jacobi taping \cite{Hogan2014} as well as primal value taping \cite{Sagebaum2018} in the reverse mode of AD. Here, we use expression templates to convert a composite operation into a structured data type that represents the computational graph and allows for its traversal in forward and reverse direction. This way, the structure of the computation is fully exposed to the compiler and can be optimized during compilation. The \emph{curiously recurring template pattern} is used to shift overhead due to the interface in the inheritance tree in Figure \ref{figure:reverse_mode} from run time to compile time.

Figure \ref{figure:reverse_mode} showcases the reverse mode without additional tangents using the example of a unary elementary operation \lstinline{f()}. The interface \lstinline{ReverseExpression} defines a routine \lstinline{v()} for forward evaluation and a routine \lstinline{back()} for backpropagation of derivatives. On the one hand, it is implemented as a type \lstinline{ReverseBasic} that contains actual data, that is, a primal value \lstinline{_v} and an adjoint value \lstinline{_bv}. On the other hand, there are derived types that stand for applied elementary operations such as \mbox{\lstinline{ReverseOpF}.} They are created by operation overloads such as
\begin{center}
	\begin{tabular}{l}
		\makecell[l]{\lstinline{ReverseOpF f(const ReverseExpression &expr)}} \\
		\makecell[l]{\lstinline{\{} \lstinline{return ReverseOpF(expr);} \lstinline{\}}} \\
	\end{tabular}
\end{center}
that do not immediately apply \lstinline{f()} but store a reference to the arguments in the returned object. Types such as \lstinline{ReverseOpF} implement the interface in a way that allows for the forward and reverse evaluation of the computational graph. A call to \lstinline{v()} causes the forward evaluation of \lstinline{_arg} and subsequent application of \lstinline{f()}. A call to \lstinline{back()} propagates derivative values in reverse direction where \lstinline{df()} stands for the derivative of \lstinline{f()} and must be implemented explicitly. The call \lstinline{_arg.v()} in \mbox{\lstinline{ReverseOpF::back()}} causes forward re-evaluations. This extends analogously to $n$-ary operations and additional forward and reverse evaluation of tangents for second order derivatives. Consider a code segment
\begin{center}
	\begin{tabular}{l}
	\makecell[l]{\lstinline{// initialize primal components}}\\
	\makecell[l]{\lstinline{// set derivative values to 0}}\\
	\makecell[l]{\lstinline{ReverseBasic arg1 = ..., arg2 = ..., ...;}} \\\\
	\makecell[l]{\lstinline{ReverseBasic result;}} \\
	\makecell[l]{\lstinline[mathescape=true]|result._bv = 1.0; //$\hspace{3pt}$seeding|} \\
	\makecell[l]{\lstinline{result = CompositeExpression(}} \\
	\hspace{2cm}\makecell[l]{\lstinline{arg1, arg2, ...);}} \\
	\end{tabular}
\end{center}
where \lstinline{CompositeExpression} stands for a composition of multiple elementary operations. Each elementary operation must be implemented according to Figure \ref{figure:reverse_mode}. The operation overloads are used to build up the computational graph of this right hand side and in the course of the assignment to \lstinline{result}, \lstinline{ReverseBasic::operator=()} is used to trigger its forward and subsequent reverse evaluation. 
In the end, \lstinline{argn._bv} carries the machine accurate derivative of \lstinline{result._v} with respect to \lstinline{argn._v} where \lstinline{n = 1, 2, ...}.

The proposed AD tool can be implemented in \cpp using \cpp{}11 features that are supported both by standard compilers such as \verb|g++| and by Nvidia's CUDA compiler driver \verb|nvcc|. Particularly, the AD tool can be applied both inside OpenMP threads and CUDA kernels.

We improve the performance of the AD tool by some adaptions that are specific to our problem and setting.
\begin{enumerate}
	\item During expression tree forward traversal, it is possible to evaluate primal values only once and store them in the nodes of the tree \cite{Phipps2012}. However, to consume as little memory as possible, we use recomputations instead. This is especially important for the GPU on which memory operations are costly and the number of registers used per thread can limit parallel execution.
	\item Instead of a recursive ansatz for higher order derivatives, we implement second order expressions explicitly. This helps the compiler with the identification and elimination of common subexpressions, which it cannot always do automatically.
	\item In the computation of the partials of $\omega$, we are always only interested in the derivative with respect to either $\varepsilon$ or $a$ but never both. If we compute the derivative with respect to one, there is no need to propagate derivative values back to the other. Therefore, we provide mixed order expressions that actively avoid reverse propagation of derivative values to lower order type arguments.
\end{enumerate}

The AD tool can only differentiate single expressions in reverse order and is overall limited to first and second order derivatives. The first and second order forward (vector) mode, however, are not restricted to single expressions and can be applied to general codes, like the ODE solvers in the case of AutoMat. In the presented design, automatic differentiation takes exclusively place in GPU registers (sometimes spilled but mostly actual, see Table \ref{table:ptxas}).

\section{Computational Layout, Profiling and Performance Limiters}
\label{section:layout_profiling_limits}

The fields for the internal variables, the current strain field and the predicted strain field, that is, the material law inputs for all voxels, reside in host memory. In general, GPU memory is not large enough to hold all of them at the same time and the elasticity solver still runs on the CPU. Furthermore, data might reside in host memory in an array-of-struct layout that does not suite GPU computing and due to the heterogenity of the material, data for all voxels of a specific material law might be arranged in memory in a non-contiguous manner. Therefore, we divide the workload into multiple chunks of fixed size, in a way that GPU memory can at least hold the material law inputs and outputs of one or few chunks. In host memory, we allocate at least one staging area of chunk size and page-locked type that allows for fast CPU-GPU data exchange. On the host side, we copy the material law inputs of a chunk into the staging area in an OpenMP parallel manner. In doing so, we arrange them in a contiguous manner in a struct-of-arrays layout, and might convert from double to single precision. Then, we process the staging area with multiple CUDA streams. Each stream copies part of the inputs to the GPU and issues the corresponding material law evaluations. We use one CUDA thread per material law evaluation and a small multiple of 32 as block size for the computational grid. Once the evaluations are done, the stream copies the material law outputs back to the staging area. The purpose of multiple streams is an overlap of CPU-GPU data exchange with GPU computations. Once the entire staging area is processed, the material law outputs are collected from the staging area, transformed back to the original layout and otherwise postprocessed as required by the elasticity solver in an OpenMP parallel manner. By means of multiple staging areas, an overlap of CPU and GPU workloads can be achieved: During GPU computations, transformations of inputs and outputs involving other staging areas can already take place on the host side. We observed no benefits for more than two staging areas.

The CPU-GPU overlap becomes evident in Table \ref{table:cpu_gpu_overlap_timings}. For material law evaluations without tangent, the time spent on material law evaluation is determined by the time it takes to stage and collect the data. CPU-GPU data exchange and GPU computations overlap almost completely with the CPU workloads. The minimum time needed for exchange of the combined data over the PCI Express bus (assuming full bandwidth and perfect overlap of both transfer directions) gives an impression of the amount of time that is at least hidden behind CPU workloads.
\begin{table*}
\centering
\begin{tabular}{c c c c c c c@{}}
\toprule
\makecell{\textbf{ODE}\\\textbf{solver}} & \makecell{\textbf{eval.}\\\textbf{strategy}} & \textbf{tangent} & \makecell{\textbf{staging}\\\textbf{[s]}} & \makecell{\textbf{wait for}\\\textbf{GPU [s]}} & \makecell{\textbf{collecting}\\\textbf{[s]}} & \makecell{\textbf{PCIe}\\\textbf{bound [s]}} \tabularnewline
\midrule
impl.~Euler & conventional & no & 133.4 & 0.6 & 110.8 & 42.7\tabularnewline
impl.~Euler & semi-automatic & no & 136.2 & 1.0 & 115.4 & 42.7\tabularnewline
impl.~Euler & automatic & no & 136.0 & 0.7 & 116.0 & 42.7\tabularnewline
ode12 & automatic & no & 124.8 & 0.6 & 107.2 & 39.9\tabularnewline
ode12 & semi-automatic & no & 129.6 & 0.6 & 112.3 & 39.9\tabularnewline
ode23 & automatic & no & 124.7 & 0.6 & 108.0 & 38.8\tabularnewline
ode23 & semi-automatic & no & 122.4 & 0.6 & 101.6 & 38.8\tabularnewline
ode23s & semi-automatic & no & 123.9 & 2.1 & 104.5 & 39.5\tabularnewline
\midrule
impl.~Euler & conventional & yes & 6.6 & 6.8 & 21.4 & 2.4\tabularnewline
impl.~Euler & semi-automatic & yes & 6.8 & 7.7 & 23.7 & 2.4\tabularnewline
impl.~Euler & automatic & yes & 6.7 & 7.6 & 24.0 & 2.4\tabularnewline
ode12 & automatic & yes & 6.6 & 97.3 & 29.7 & 2.4\tabularnewline
ode12 & semi-automatic & yes & 6.9 & 79.6 & 34.7 & 2.4\tabularnewline
ode23 & automatic & yes & 6.7 & 30.9 & 35.6 & 2.4\tabularnewline
ode23 & semi-automatic & yes & 6.6 & 26.8 & 29.9 & 2.4\tabularnewline
ode23s & semi-automatic & yes & 6.7 & 467.6 & 34.3 & 2.4\tabularnewline
\bottomrule
\end{tabular}
\caption{Refinement of timings for GPU configurations from Tables \ref{table:implicit_euler_timings} and \ref{table:advanced_integrators_timings} into CPU workloads and non-overlapped GPU workloads. Includes also lower time bound for the PCIe data exchange. Some variations between material law evaluations without tangent are due to differences in the number of elasticity solver iterations. The variations in the PCIe bounds indicate this extent.}
\label{table:cpu_gpu_overlap_timings}
\end{table*}
The exemplary profilings presented in Figure \ref{figure:profilings_no_tangent} show that the GPU compute time is in turn dominated by CPU-GPU data exchange, and due to overlap mostly hidden behind it.
\begin{figure*}[ht]
	\centering
	\begin{tabular}{cc}
		(1) & \begin{minipage}{0.9\textwidth}\includegraphics[width=\textwidth]{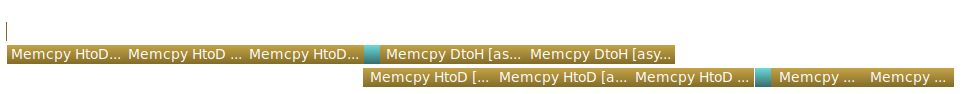}\end{minipage} \\
		(2) & \begin{minipage}{0.9\textwidth}\includegraphics[width=\textwidth]{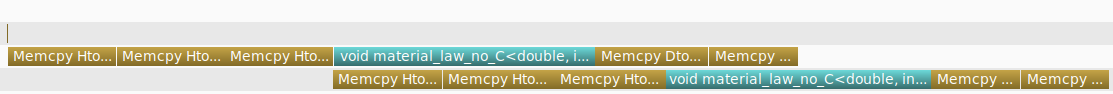}\end{minipage} \\
		(3) & \begin{minipage}{0.9\textwidth}\includegraphics[width=\textwidth]{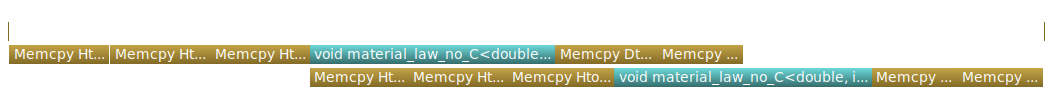}\end{minipage} \\
		(4) & \begin{minipage}{0.9\textwidth}\includegraphics[width=\textwidth]{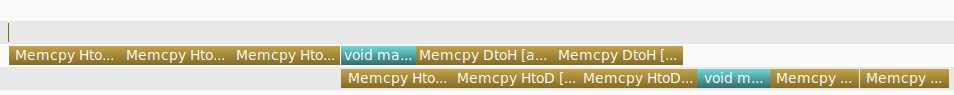}\end{minipage}
	\end{tabular}
	\caption{Profilings performed for material law evaluations without tangent and implicit Euler --- conventional (1), semi-automatic (2), automatic (3) --- and automatic ode23 (4). Indicates overlap and relative duration within configurations, time scales vary between (1)--(4). Staging area processed with two CUDA streams. CPU-GPU data exchange brown, computations blue.\protect\footnotemark}
	\label{figure:profilings_no_tangent}
\end{figure*}
\begin{figure*}[ht]
	\centering
	\begin{tabular}{cc}
		(1) & \begin{minipage}{0.9\textwidth}\includegraphics[width=\textwidth]{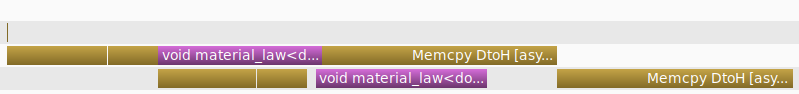}\end{minipage} \\
		(2) & \begin{minipage}{0.9\textwidth}\includegraphics[width=\textwidth]{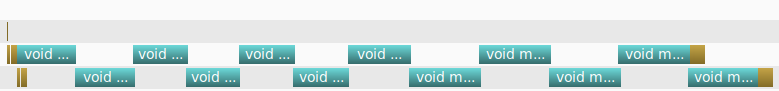}\end{minipage} \\
		(3) & \begin{minipage}{0.9\textwidth}\includegraphics[width=\textwidth]{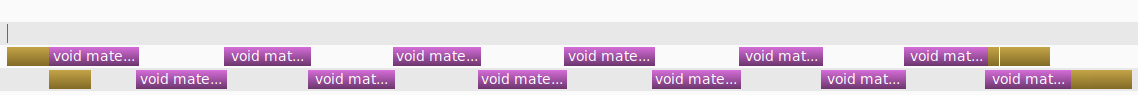}\end{minipage} \\
		(4) & \begin{minipage}{0.9\textwidth}\includegraphics[width=\textwidth]{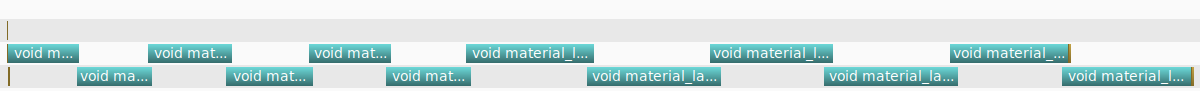}\end{minipage}
	\end{tabular}
	\caption{Profilings performed for automatic material law evaluations with tangent --- implicit Euler (1), ode12 (2), ode23 (3) --- and semi-automatic ode23s (4). Time scales vary between (1)--(4). Note the six evaluation steps between data transfer in (2)--(4). Staging area processed with two CUDA streams. CPU-GPU data exchange brown, computations blue/purple.\textsuperscript{\ref{footnote:nvvp}}}
	\label{figure:profilings_with_tangent}
\end{figure*}

For material law evaluations with tangent, the observations are different. Here, staging and collecting cannot hide all GPU workloads, in particular the GPU computations which are also more expensive than the CPU-GPU data exchange. This has two reasons. First, the the postprocessing step for the tangent or solving the coupled ODE system, respectively, is in itself computationally more expensive. The derivative components, however, also increase the memory footprint of the GPU kernels, in particular the number of registers used per thread. This can be seen in Table \ref{table:ptxas}.
\begin{table*}
\centering
\begin{tabular}{c c c c c c c@{}}
\toprule
\makecell{\textbf{ODE}\\\textbf{solver}} & \makecell{\textbf{eval.}\\\textbf{strategy}} & \textbf{tangent} & \makecell{\textbf{stack frame}\\\textbf{[bytes]}} & \makecell{\textbf{spill stores}\\\textbf{[bytes]}} & \makecell{\textbf{spill loads}\\\textbf{[bytes]}} & \makecell{\textbf{regis-}\\\textbf{ters}} \tabularnewline
\midrule
impl.~Euler & conventional & no & 0 & 0 & 0 & 73 \tabularnewline
impl.~Euler & semi-automatic & no & 656 & 0 & 0 & 184 \tabularnewline
impl.~Euler & automatic & no & 656 & 0 & 0 & 186 \tabularnewline
ode12 & automatic & no & 0 & 0 & 0 & 156 \tabularnewline
ode12 & semi-automatic & no & 0 & 0 & 0 & 136 \tabularnewline
ode23 & automatic & no & 0 & 0 & 0 & 206 \tabularnewline
ode23 & semi-automatic & no & 0 & 0 & 0 & 174 \tabularnewline
ode23s & semi-automatic & no & 944 & 224 & 392 & 255 \tabularnewline
\midrule
impl.~Euler & conventional & yes & 0 & 0 & 0 & 114 \tabularnewline
impl.~Euler & semi-automatic & yes & 960 & 0 & 0 & 173 \tabularnewline
impl.~Euler & automatic & yes & 960 & 0 & 0 & 198 \tabularnewline
ode12 & automatic & yes & 0 & 0 & 0 & 246 \tabularnewline
ode12 & semi-automatic & yes & 0 & 0 & 0 & 215\tabularnewline
ode23 & automatic & yes & 368 & 320 & 424 & 255\tabularnewline
ode23 & semi-automatic & yes & 168 & 88 & 136 & 255 \tabularnewline
ode23s & semi-automatic & yes & 4416 & 4196 & 3952 & 255 \tabularnewline
\bottomrule
\end{tabular}
\caption{\texttt{ptxas} info for configurations from Table \ref{table:cpu_gpu_overlap_timings} (double precision). Indicates resources consumed per CUDA thread.}
\label{table:ptxas}
\end{table*}
This limits the overall number of threads that can run in parallel, and it is important to keep that number small. To that end, all ODE solvers with adaptive step size among the GPU configurations with tangent are subject to another performance optimization. Instead of propagating all six tangent directions simultaneously through one material law evaluation with the forward vector mode, we re-evaluate each material law six times, each with the standard forward mode and one tangent direction, i.\,e.~we compute $C_{n+1}$ column by column.
This can also be seen in Figure \ref{figure:profilings_with_tangent}. Note that this has no influence on staging, collecting or the amount of CPU-GPU data exchange. We trade memory for computations on the GPU, and the performance benefits of kernels with smaller memory footprint outweigh the additional effort incurred by the re-evaluations.

Interestingly, the CPU-GPU data exchange is --- due to overlap and the cost of staging and collecting --- in none of the configurations discussed above a key limiting factor. Nonetheless, our implementation of the material law from Section \ref{section:example} reduces that data. Material law evaluations with tangent, for example, copy back $C_{n+1}$ but neither stresses nor internal variables. Specifically for the GSM given by \eqref{eq:michel_suquet_omega}, \eqref{eq:michel_suquet_psi}, we exploit $\viscop\varepsilon\in\operatorname{range}(\dev)$, i.\,e.~one component of the viscoplastic strain can be eliminated and is computed on the fly in the implementations of $\omega$ and $\Psi$ from the others instead.
\footnotetext{Generated with Nvidia Visual Profiler, \url{https://developer.nvidia.com/nvidia-visual-profiler}.\label{footnote:nvvp}}

The effect of using AutoMat on the total runtime of FFT-based homogenization is summarized in Table \ref{table:total_runtimes}. On the CPU, ode23 is the best choice. It needs approximately the same time as our conventional implementation and gives more precise results according to Section \ref{subsection:solution_accuracy}. On the GPU, the choice of the ODE solver does not influence the total runtime significantly with the notable exception of ode23s. For all other ODE solvers, AutoMat accelerates the FFT-based homogenization method by a factor of more than two on the GPU. For ode23s, this holds only true if the reference material is not updated. In all other cases, the ODE solver can be chosen without performance considerations on the GPU.
\begin{table*}
\centering
\begin{tabular}{c c c c c@{}}
\toprule
\textbf{architecture} & \textbf{ODE solver} & \textbf{evaluation strategy} & \makecell{\textbf{time [s]}\\with update} & \makecell{\textbf{time [s]}\\without update}\tabularnewline
\midrule
CPU & implicit Euler & conventional & 1162.74 & 1133.61\tabularnewline
CPU & implicit Euler & automatic & 6282.41 & 6055.91\tabularnewline
CPU & implicit Euler & semi-automatic & 2185.59 & 2100.79\tabularnewline
CPU & ode12 & automatic & 11555.43 & 2553.35\tabularnewline
CPU & ode12 & semi-automatic & 2839.94 & 1000.24\tabularnewline
CPU & ode23 & automatic & 3122.45 & 1438.54\tabularnewline
CPU & ode23 & semi-automatic & 1170.09 & 803.50\tabularnewline
CPU & ode23s & semi-automatic & 4710.81 & 1877.27\tabularnewline
\midrule
GPU & implicit Euler & conventional & 492.14 & 473.91 \tabularnewline
GPU & implicit Euler & automatic & 541.88 & 527.92 \tabularnewline
GPU & implicit Euler &  semi-automatic & 532.13 & 502.07 \tabularnewline
GPU & ode12 & automatic & 574.26 & 485.00 \tabularnewline
GPU & ode12 & semi-automatic & 600.93 & 481.27 \tabularnewline
GPU & ode23 & automatic & 534.34 & 470.86 \tabularnewline
GPU & ode23 & semi-automatic & 487.55 & 453.94 \tabularnewline
GPU & ode23s & semi-automatic & 958.07 & 468.42 \tabularnewline
\bottomrule
\end{tabular}
\caption{Total runtime of FFT-based homogenization with and without updated reference material for different settings of AutoMat on the CPU and GPU.}
\label{table:total_runtimes}
\end{table*}

\section{Conclusion}
\label{sec:Conclusion}

In this article, we have introduced and studied a universal method for evaluating GSMs. With automatic differentiation, the material law setup is reduced to the implementation of two potentials. This eliminates the inconvenience of hand-computed derivatives and greatly simplifies the material law implementation process.

In a first step, we automatized the conventional implicit Euler approach and were able to reproduce the solution of the elasticity problem up to machine accuracy. However, we also demonstrated that its tangent computation is subject to general accuracy issues. As these can be resolved by an integration of the evolution equation for the state variables with adaptive time step sizes, we detailed how blackbox automatic differentiation of Rosenbrock and Runge-Kutta methods must be modified in the presence of time step size control to obtain derivatives that are as accurate as the primal solution.
Material law evaluations with adaptive time steps improved the solution accuracy of the elasticity problem significantly for large loading steps, especially when stress and stiffnes error measures are used for time step size control. Thus, we have a method at hand to assess the time discretization error disregarding contributions from solving the evolution equation.

To make the method applicable to CT-scale problems, we finally moved the material law evaluation to the GPU. Various kinds of overlap resulted in run times for the stress response that are independent of the chosen integration scheme and are moreover much faster than our conventional implementation on the CPU. 
Especially automatic evaluation strategies are accelerated significantly, which would not be possible without our efficient implementation of automatic differentiation on the GPU.

We conclude that the framework for integrating GSMs into mechanical solvers presented in this article is unmatched in its simultaneous flexibility, accuracy and performance. It is particularly well suited to improve and accelerate matrix-free solvers like FFT-based homogenization. With the resulting user-friendly and fast method, it becomes feasible to investigate the non-linear material behavior, like viscoelasticity and viscoplasticity, of composites on a single workstation.

\begin{acknowledgements}
The authors are deeply indebted to Heiko Andrä for numerous stimulating discussions on the topic of this article. M.~Kabel thanks Quentin Horn for setting up the GPU benchmarking framework of AutoMat during his internship at Fraunhofer ITWM.
\end{acknowledgements}

\bibliographystyle{spbasic}
\bibliography{literature}{}

\end{document}